\newif\ifshort
\newtheorem{theo}{Theorem}
\newtheorem{defi}{Definition}
\newtheorem{cor}{Corollary}
\newcommand{\off}[1]{}
\newcommand{\ton}[1]{\textcolor[rgb]{0.00,0.00,0.00}{#1}}
\newcommand{\mm}[1]{\textcolor[rgb]{0.00,0.00,0.00}{#1}}
\newcommand{\RTT}{{\sf RTT}}
\newcommand{\ACK}{{\sf ACK}}
\newcommand{\NACK}{{\sf NACK}}
\newcommand{\Dm}{D_{\sf mean}}
\newcommand{\DM}{D_{\sf max}}
\newcommand{\ov}{\bar{o}}
\title{Adaptive Causal Network Coding with Feedback\off{ for Delay and Throughput Guarantees}\vspace{-1mm}}
 \author{%
   \IEEEauthorblockA{Alejandro Cohen\IEEEauthorrefmark{1},
                     Derya Malak\IEEEauthorrefmark{1},
                     Vered Bar Bracha\IEEEauthorrefmark{2},
                     and Muriel M{\'e}dard\IEEEauthorrefmark{1}}
   \IEEEauthorblockA{\IEEEauthorrefmark{1}%
                     Research Laboratory of Electronics, MIT, Cambridge, MA, USA,
                     \{cohenale, deryam, medard\}@mit.edu}
   \IEEEauthorblockA{\IEEEauthorrefmark{2}%
                     Intel Corporation,
                     vered.bar.bracha@intel.com}\thanks{This research was supported in part by the Intel Corporation and by DARPA. Patent application submitted: no. 62/853,090.}
 \vspace{-9mm}}
\begin{document}
\maketitle
%%%%%%%%%%%%%%%%%%%%%%%%%%%%%%%%%%%%%%%%%%%%%%%%%%%%%%%%%%%%%%%
\begin{abstract}
%%%%%%%%%%%%%%%%%%%%%%%%%%%%%%%%%%%%%%%%%%%%%%%%%%%%%%%%%%%%%%%
We propose a novel adaptive and causal random linear network coding (AC-RLNC) algorithm with forward error correction (FEC) for a point-to-point communication channel with delayed feedback. \mm{AC-RLNC is adaptive to the channel condition, that the algorithm estimates, and is  causal, as coding depends on the particular erasure realizations, as reflected in the feedback acknowledgments. Specifically,} the proposed model can learn the erasure pattern of the channel via feedback acknowledgments, and adaptively adjust its retransmission rates using a priori and posteriori algorithms. By those adjustments, AC-RLNC achieves the desired delay and throughput, and enables transmission with zero error probability. We upper bound the throughput and the mean and maximum in order delivery delay of AC-RLNC, and  prove that for the point to point communication channel in the non-asymptotic regime the proposed  code  may achieve more than 90\% of the channel capacity. To upper bound the throughput we utilize the minimum Bhattacharyya distance for the AC-RLNC code. We validate those results via simulations. We contrast  the performance of AC-RLNC with the one of  selective repeat (SR)-ARQ, which is causal but not adaptive, and is a posteriori. Via a study on experimentally obtained commercial traces, we demonstrate that a protocol based on AC-RLNC can, vis-\`a-vis SR-ARQ, double the throughput gains, and triple the gain in terms of mean in order delivery delay when the channel is bursty. Furthermore, the difference between the maximum and mean in order delivery delay is much smaller than that of SR-ARQ. Closing the delay gap along with boosting the throughput is very promising for enabling ultra-reliable low-latency communications (URLLC) applications. %We validate the performance of data delivery under the traces of Intel, which confirms the correctness of our analysis.\off{ We think this will be promising for URLLC applications.}
\end{abstract}
\begin{IEEEkeywords}
Random linear network coding (RLNC), forward error correction (FEC), feedback, causal, coding, adaptive, in order delivery delay, throughput.
\end{IEEEkeywords}
%%%%%%%%%%%%%%%%%%%%%%%%%%%%%%%%%%%%%%%%%%%%%%%%%%%%%%%%%%%%%%%
\section{Introduction}
%%%%%%%%%%%%%%%%%%%%%%%%%%%%%%%%%%%%%%%%%%%%%%%%%%%%%%%%%%%%%%%
Classical information theory problems consider the very large blocklength regime to achieve the desired communication rates. On the other hand, in streaming communications, there are real-time constraints on the transmission which requires low delays. Such low delays cannot be achieved using very large blocklengths; therefore the classical approaches do not provide the desired throughput-delay tradeoffs. To alleviate the problem of large in order delivery delays, different forward error correction (FEC) techniques for packet-level coding have been contemplated \cite{shokrollahi2006raptor,luby2002lt,cloud2015coded,GuoShiCaiMed2013}. We discuss their benefits and shortcomings in this section.

The challenges associated with packet-level coding are multifold, which are due to feedback, real-time delivery, and congestion. The situation becomes deteriorated when there are round-trip time (RTT) fluctuations, and/or erasure bursts due to variations in the channel state. When we consider these critical issues, along with the real-time transmission constraints, it becomes challenging to operate the system at a desired point on the achievable throughput and in order delivery delay regime. As the notions of throughput and delay are inseparable, it is essential to jointly optimize coding for throughput and delay. To that end, in this paper, we provide an adaptive and causal code construction for packet scheduling. To ensure desired rates the sender use FEC. To provide low delay, the sender tracks channel variations and learns channel states from feedback in order to adapt the code rate to the channel conditions. \mm{In other words, the solution is adaptive since AC-RLNC can track the channel condition, and it is also causal because coding is reactive  to the feedback acknowledgments (i.e., the sender adjusts the retransmission rates of the code using a priori and posteriori algorithms).}

\begin{figure*}
    \centering
    \subfigure[]{\includegraphics[scale=0.3]{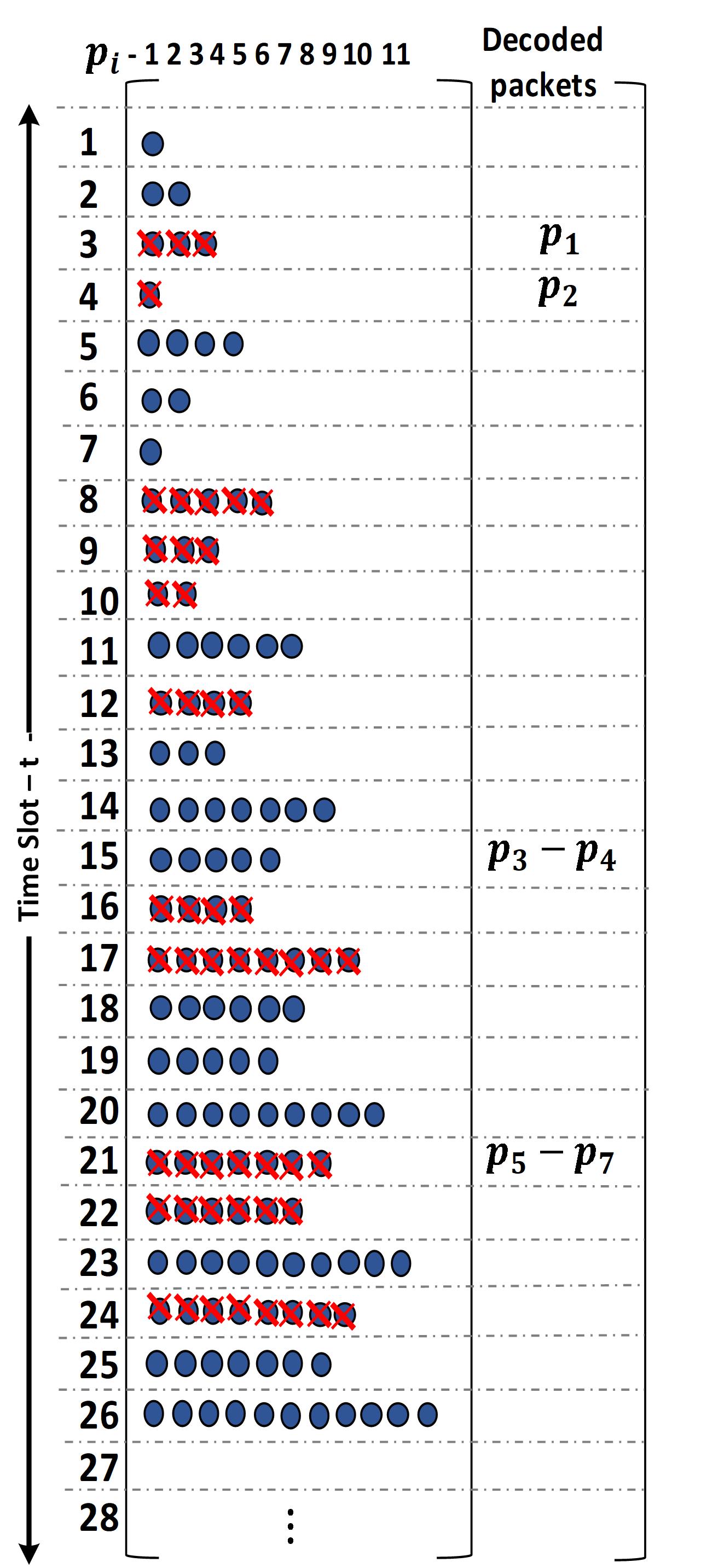}}
    \subfigure[]{\includegraphics[scale=0.3]{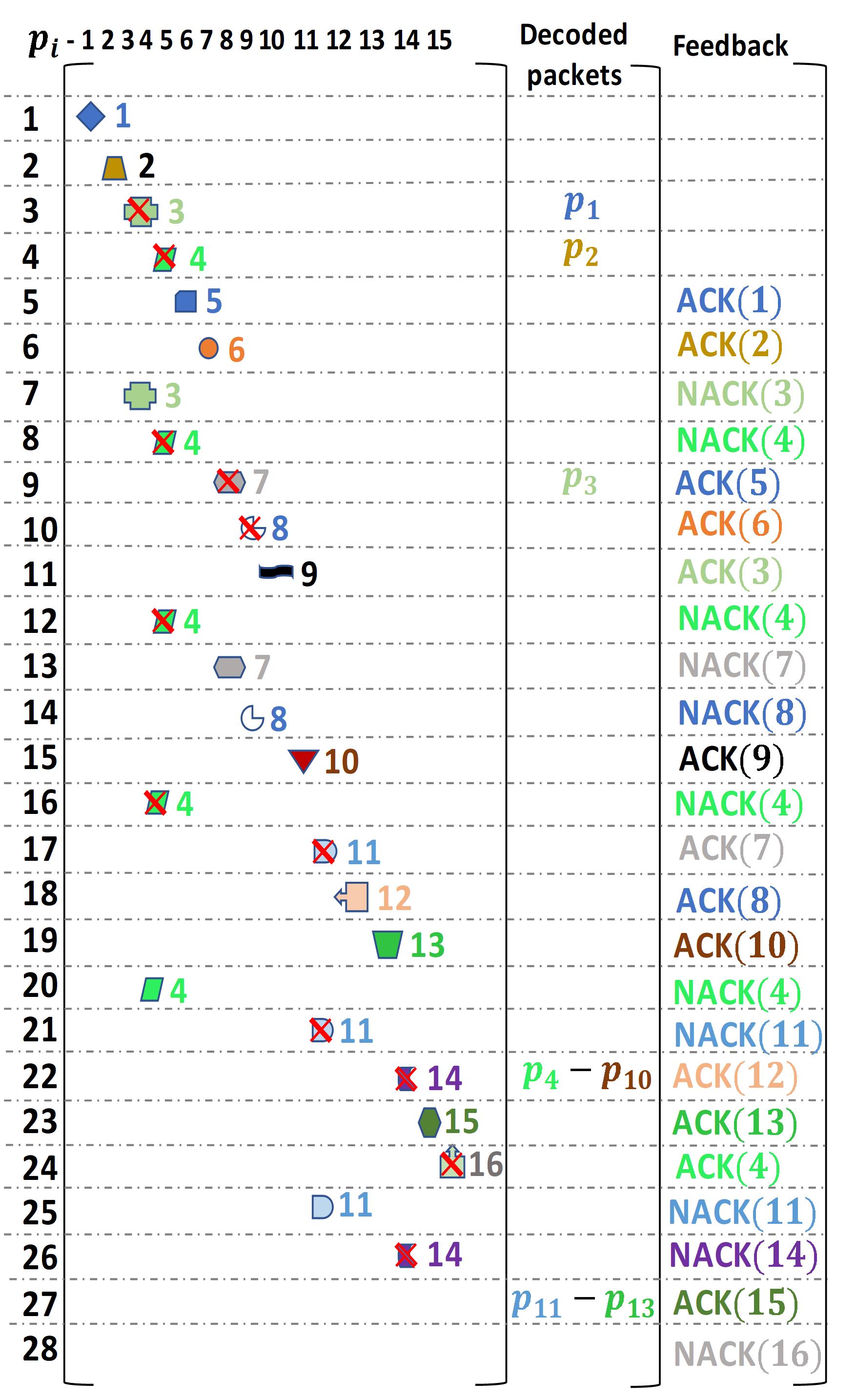}}
    \subfigure[]{\includegraphics[scale=0.3]{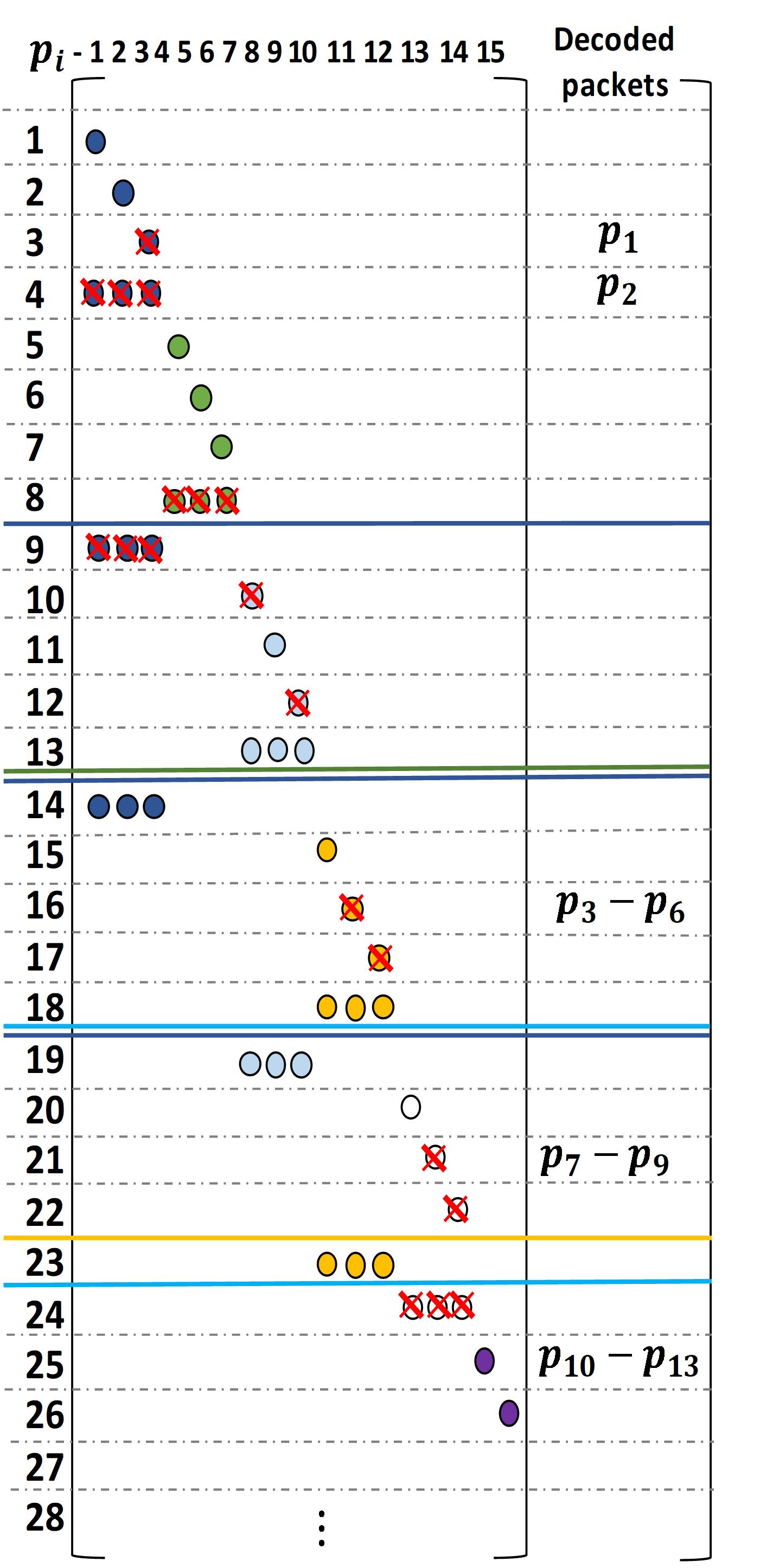}}
    \subfigure[]{\includegraphics[scale=0.3]{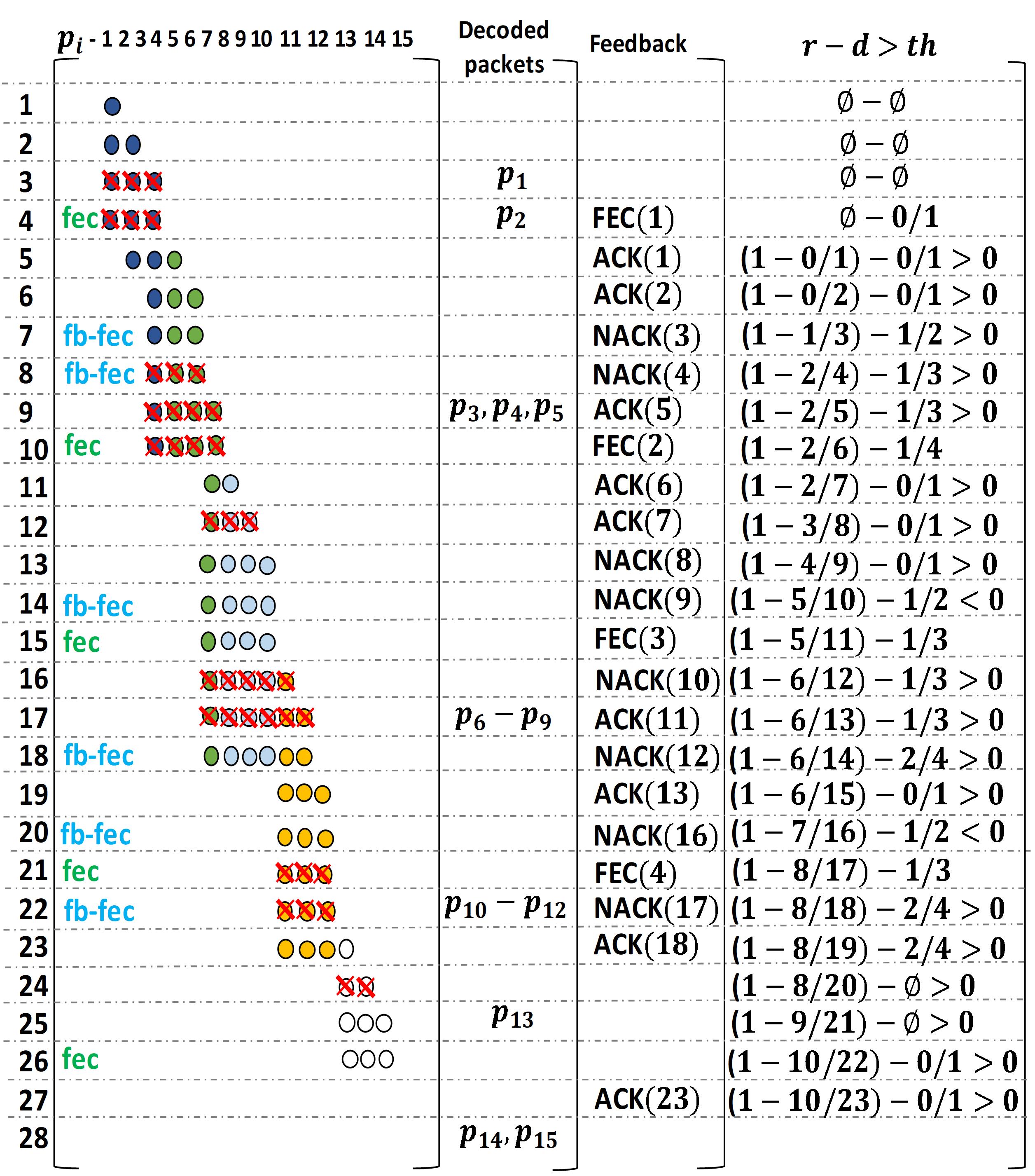}}
    \caption{Coding matrices assuming the same loss pattern and RTT of 4 time-slots. (a) Time-Invariant Streaming Code \cite{joshi2012playback},\cite{joshi2014effect}; \mm{(b) SR-ARQ \cite{weldon1982improved,anagnostou1986performance,ausavapattanakun2007analysis};} (d) Systematic Code with Feedback \cite{cloud2015coded}; (d) AC-RLNC with Feedback (Our Scheme). In the codes presented in (a) and (c), if the actual rate of the channel is lower than the code or FEC rate, respectively, the in order delay increases. If the actual rate of the channel is higher than the code or FEC rate, respectively, the throughput decreases. \mm{In the SR-ARQ protocol presented in (b), it is possible to obtain the maximum throughput of the channel, yet, at the expense of a long in order delay.} In the AC-RLNC presented in (d) to manage this tradeoff, first, a priori FEC is transmitted every RTT period according to the actual rate of the channel. Second, posteriori the sender adaptively and causally decides if to add new information packet to the RLNC or send additional DoF according to the rate of the channel. In the last column, $d$ is updated after the sender computes $r-d$ in a way that $m_d=m_d+1$, $a_d=a_d$. Therefore, the equation $r-d>0$ becomes correct after the update. In Appendix \ref{App:adaptivecausalRLNC}, we provide a detailed description of the example (d).}
    \label{fig:foobar}
     \vspace{-3mm}
\end{figure*}

%%%%%%%%%%%%%%%%%%%%%%%%%%%%%%%%%%%%%%%%%%%%%%%%%%%%%%%%%%%%%%%
\subsection{Related Work}
%%%%%%%%%%%%%%%%%%%%%%%%%%%%%%%%%%%%%%%%%%%%%%%%%%%%%%%%%%%%%%%
Different coding techniques have been proposed to correct erasures in wireless channels such as chunk codes, rateless erasure codes or fountain codes (e.g., LT codes \cite{luby2002lt} and Raptor codes \cite{shokrollahi2006raptor}), systematic codes \cite{cloud2015coded}, and streaming codes \cite{malak2019tiny}. \mm{However, although part of the solutions proposed in the literature are reactive according to the feedback acknowledgments (i.e. causal as given for example in \cite{cloud2015coded}), none of those solutions is tracking the varying channel condition and the rate (i.e. not adaptive). The proposed techniques utilize several different types of feedback models \cite{bertsekas1992data,malak2019tiny}. Next, we will elaborate on the coding approaches that are the closest to ours.} While fountain codes are capacity achieving, and have efficient encoding and decoding algorithms, they are not suitable for streaming because the decoding delay is proportional to the size of the data \cite{LubMitShoSpiSte1997}. In the case of block codes, the receiver has to wait till the end of the block to be able to start decoding. However, the required block length for the code to achieve a desired reliability performance can be very high due to the asymptotic nature of information theoretical results. Furthermore, the block length is fixed and set in advance for the application of interest, and therefore block codes are not causal. To mitigate this problem and reduce the in order delivery in wireless systems, authors in \cite{KarLei2014} have proposed a low delay streaming code scheme which is not adaptive. Similarly, convolutional codes can reduce the decoding time, and a general construction for complete maximum distance profile convolutional codes has been implemented in \cite{lieb2018complete}.

Error control protocols such as Automatic Repeat reQuest (ARQ), and hybrid ARQ (HARQ) have been incorporated into 5G mobile networks \cite{3gpp2017study} in order to increase the performance of wireless technologies such as HSPA, WiMAX and LTE \cite{khosravirad2017analysis}. While these repetition-based protocols provide desired performance when the feedback is perfect, the streaming quality might degrade significantly when the feedback is not reliable or delayed. This may cause extra latency, which is not desired in delay-sensitive applications. Recently, for streaming, systematic codes, which are coded generalizations of selective repeat ARQ (SR-ARQ) have been proposed in \cite{cloud2015coded}. Adaptive coded ARQ model with cumulative feedback have been developed in \cite{malak2019tiny}.
\off{In addition, systematic codes have been proposed in \cite{cloud2015coded}, which are coded generalizations of selective repeat ARQ (SR-ARQ), and the adaptive coded ARQ model with cumulative feedback as in \cite{malak2019tiny}.}

Using FEC, in order delivery delay over packet erasure channels can be reduced \cite{KarLei2014}, and the performance of SR-ARQ protocols can be boosted \cite{cloud2015coded}. Delay bounds for convolutional codes have been provided in \cite{GuoShiCaiMed2013}, \cite{TomFitLucPedSee2014}. Packet dropping to reduce the playback delay of streaming over an erasure channel has been investigated in \cite{joshi2012playback}, \cite{joshi2014effect}, \cite{joshi2016efficient}. Delay-optimal codes without feedback for burst erasure channels, and the decoding delay of codes for more general erasure models have been analyzed in \cite{Martinian2004}. Transmission with delay constraints has been considered in \cite{adams2014delay}, by combining the PHY and NET layer aspects, where bit level FEC is performed at the PHY layer, and Random Linear Network Coding (RLNC) is performed at the NET layer. To prevent packet losses in the presence of interference and large RTT, a network coded TCP solution has been proposed in \cite{kim2012network}. In Fig. \ref{fig:foobar}, we refer to \mm{three} of the coding approaches that are the closest to ours. However, these alternatives are not \mm{adaptive. Furthermore, the time-invariant streaming code in (a)} is not causal, and the \mm{ SR-ARQ in (b) and the systematic code with feedback in (c)} do not adapt the code rate to varying channel conditions, unlike our AC-RLNC proposed solution shown in \mm{(d) that tracks the channel conditions and the rate}. In Sec. \ref{back}, we provide more details on those codes.

Delay and throughput gains of coding in unreliable networks have been discussed in \cite{ErOzMedAh2008}. The delay advantage of coding in packet erasure networks has been studied in \cite{DikDimHoEff2014}. A capacity-achieving coding scheme for unicast or multicast over lossy packet networks has been proposed in \cite{LunMedKoeEff2008}, where intermediate nodes perform recoding and send out coded packets formed from random linear combinations of previously received packets. Joint optimization of coding for delay for single hop erasure channels, and packet scheduling for a broadcast channel with multiple receivers having different delay sensitivities has been considered \cite{zeng2012joint}. The single hop model has been later generalized to multiple-hops \cite{MSME19}. There also exist work applying FEC and network coding on URLLC applications  \cite{malak2018throughput,cloud2015coded,anand2018resource}.

Current approaches address some of the challenges such as reducing the in order delivery delay to provide the desired reliability-delay trade-offs. However, the coding in general is done in a deterministic manner. This approach deteriorates the performance when the channel is bursty, RTT fluctuates, or real-time transmission constraints are imposed.

\begin{comment}
%References sent by Intel
%We can use these for multipath
%Multipath (RLNC) references \cite{gabriel2018multipath}, forward error correction (FEC) \cite{tesema2018layer},
\cite{8424259}
\cite{domazetovic2016performance}, \cite{sun2013ieee}
\cite{du2015network}: they approximate the end-to-end probability distribution of a network by combining the parallel/tandem channel reduction with the structure of flows over the network (You might want to check this paper Alejandro.)
\end{comment}

%%%%%%%%%%%%%%%%%%%%%%%%%%%%%%%%%%%%%%%%%%%%%%%%%%%%%%%%%%%%%%%
\subsection{Contributions}
%%%%%%%%%%%%%%%%%%%%%%%%%%%%%%%%%%%%%%%%%%%%%%%%%%%%%%%%%%%%%%%
We propose a novel adaptive causal coding scheme with FEC for a point-to-point communication channel with delayed feedback. The proposed model can track the erasure pattern of the channel, and adaptively adjusts its retransmission rates a priori and posteriori based on the channel quality (the erasure burst pattern) and the feedback acknowledgments. \off{The algorithm suggested contains two retransmition mechanisms. The first, a priori estimate the channel by the delayed feedback. The second, posteriori decides the number of retransmissions.}

%Analysis
We provide analytical results for the adaptive code suggested herein. Specifically, we upper bound the mean and max in order delivery delays\footnote{In this paper, we only consider the in order delivery delay given in Definition \ref{dife_delay}.} and the throughput. We prove that for the point to point communication in the non-asymptotic regime the code proposed may achieve more than 90\% of the channel capacity, and observe significant gains in the mean and maximum in order delivery delays, compared to SR-ARQ. The gains become more apparent when the channel is more bursty and RTT is high. The main distinctions of this approach from standard approaches like SR-ARQ \cite{3gpp2017study}, or other FEC schemes \cite{KarLei2014} which are very sensitive to the fluctuations in the channel quality (bursts) are that the proposed adaptive causal coding scheme is more robust to the burst erasures, and the round trip delay, and can handle in order delivery delay requirements. The algorithm parameters can be chosen to bound the delay and throughput of AC-RLNC, and achieve the desired delay-throughput tradeoffs. AC-RLNC also enables transmission with zero error probability. Zero error capacity has been considered in the literature \cite{shannon1956zero,gallager1965simple,lovasz1979shannon}, and for the finite block length regime \cite{sahai2008block,polyanskiy2011feedback}.

%Simulations
%performance comparison with other works
The simulation results for the implementation of AC-RLNC demonstrate the robustness of the algorithm. They also show that in real wireless scenarios, in addition to the improvement in throughput, the gap between the mean in order delay and the maximum in order delay is very small unlike the one in SR-ARQ where the growth rate of the gap is higher. In consistent with these, we validate the performance of data delivery of our algorithm via experimental simulations under the traces of Intel. \off{We explain the details about the Intel measurements in Sect. \ref{simulation}. The analytical results presented have a good agreement with the simulation results.}

The rest of the paper is structured as follows. In Sect. \ref{back}, we provide some definitions and metrics. In Sect. \ref{sys}, we formally describe the system model and problem formulation. In Sect. \ref{algo}, we present the adaptive causal network coding algorithm, and in Sect. \ref{bounds}, we provide the analytical results. In Sect. \ref{simulation}, we describe an experimental study and simulations exemplifying the performance of the proposed method. Finally, we conclude the paper in Sect. \ref{conc} with possible future directions.

%\vspace{-0.4cm}
%%%%%%%%%%%%%%%%%%%%%%%%%%%%%%%%%%%%%%%%%%%%%%%%%%%%%%%%%%%%%%%
\section{Background}\label{back}
We provide background on the important metrics we study.

\begin{defi}{In order delivery delay, $D$.}\label{dife_delay} This is the difference between the time an information packet is first transmitted in a coded packet by the sender and the time that the same information packet is decoded, in order at the receiver, and successfully acknowledged
\cite{zeng2012joint}.
\end{defi}

The in order delivery delay also includes the decoding delay of packets at the receiver. We assume that decoding is via Gaussian elimination\footnote{Given a large enough field $\mathbb{F}_z$ (when $z$ is the field size), the receiver can decode a generation of $k$ packets with high probability, through Gaussian elimination performed on the linear system formed on any $k$ coded packets \cite{ho2006random}. % so that we get sufficient number of independent combinations of packets.
}.

The mean in order delivery delay, $\Dm$, is the average value of $D$. This metric is of interest in reducing the overall completion delay of all packets such as in file download.

The maximum in order delivery delay, $\DM$, is the maximum value of $D$ among all the information packets in the stream. This metric is of interest in reducing the maximum inter-arrival time between any two packets with new information, which might be critical for real-time applications, e.g., video streaming, conference calls, or distributed systems in which real-time decisions are taken according to information received from another source in the system.

We use random linear network coding (RLNC) as a forward error correction (FEC) technique to control the erasures over the unreliable forward channel. The benefit of RLNC is that the sender does not need to retransmit the same packet, and the receiver only has to collect enough degrees of freedom (DoF) to be able to decode the packets within the generation window. The number of information packets contained in an RLNC coded packet should be determined in accordance with the DoF needed by the receiver. We detail how we adaptively determine the effective window size in Sect. \ref{algo}. We assume the packet overhead due to the RLNC header and the decoding time using Gaussian elimination are negligible \cite{patterson2014and,chou2003practical}.

The successful delivery of each packet is acknowledged by the receiver on a per packet basis. We assume that the forward channel is unreliable and the reverse channel (feedback) is reliable. Therefore, the feedback acknowledgements are in the form of ACK's or NACK's which are successfully delivered after one $\RTT$.

\begin{defi}{Throughput, $\eta$.}
This is the total amount of information (in bits/second) delivered, in order at the receiver in $n$ transmissions over the forward channel. The normalized throughput is the total amount of information delivered, in order at the receiver divided by $n$ and the size of the packets.
\end{defi}

%\tof{how does the throughput compare to the capacity of the channel? Can you please explain this Alejandro}

%How are the delay and throughput guaranteed?
The concepts of in order delivery delay and throughput are not separable. The higher the throughput, i.e., the total amount of information we want to deliver using the forward channel, the higher the in order delay, causing the tradeoff between delay and throughput. The FEC mechanism along with the feedback determines the performance of the protocol. There exist work on characterizing this tradeoff, such as in fixed nodes \cite{gupta2000capacity}, mobile \cite{grossglauser2001mobility}, ad hoc mobile networks \cite{neely2005capacity}, and fixed and mobile ad hoc networks \cite{gamal2004throughput}.
However, to the best of our knowledge, the exact characterization and optimization of the tradeoff between in order delay and throughput might not be possible (or tractable) \cite{malak2019tiny}. Via the proposed AC-RLNC scheme, one of the main objectives of the current paper is to provide guarantees (see Sects. \ref{algo}, \ref{simulation}) for in order delay and throughput, instead of the explicit characterization of the tradeoff.
\begin{figure}
    %{\includegraphics[width =\columnwidth]{figures/window_coding_v2}}
    {\includegraphics[trim=0cm 0.48cm 0cm 0.1cm,clip,scale=0.5]{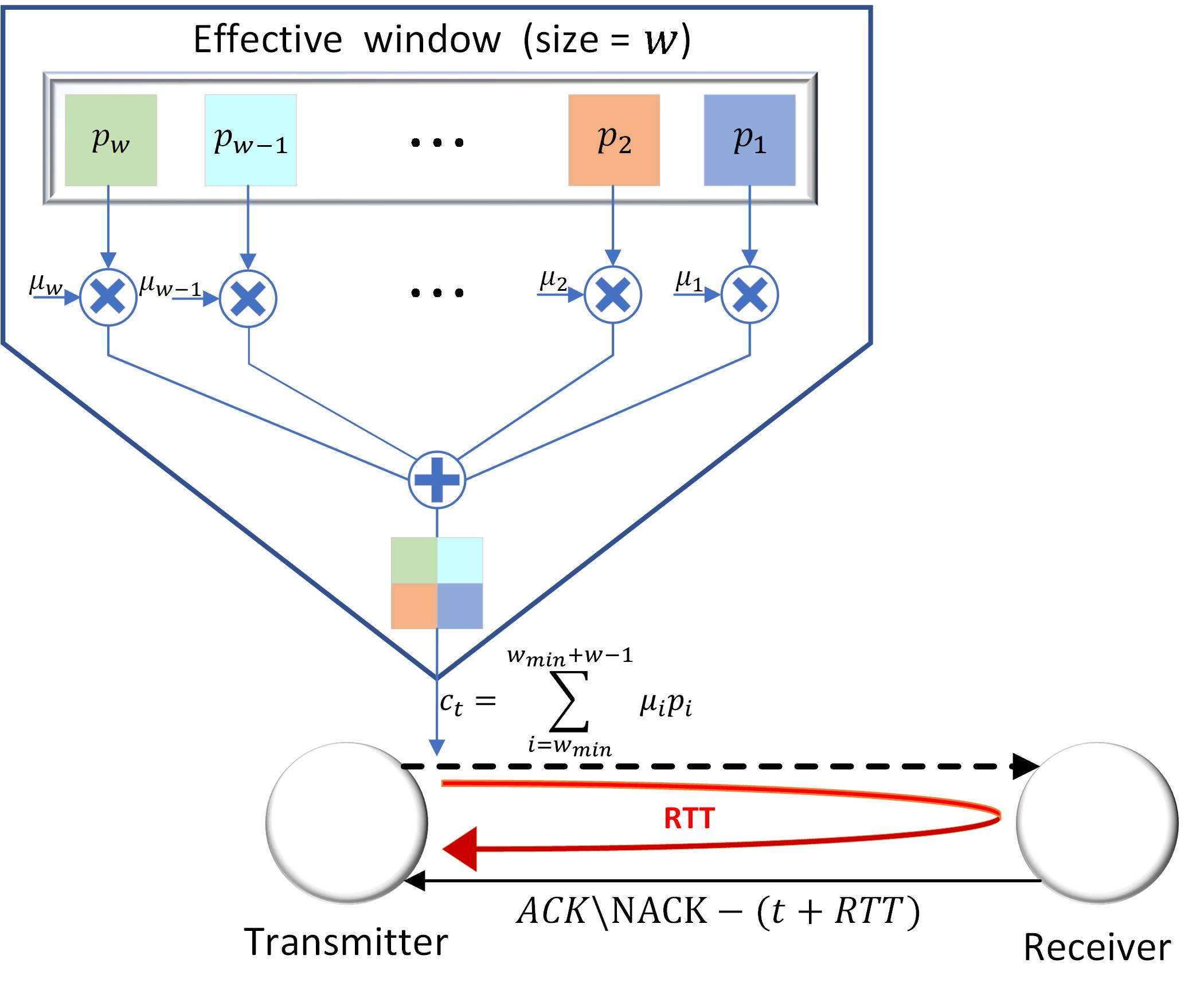}}
	\caption{System model and encoding process of the coded RLNC combination.
	\mm{The adaptive causal encoding process and the effective window size $w$ are detailed in Sec. \ref{algo}.} In this example, for simplicity of notation $w_{\min}=1$. }
	\label{fig:window_coding_v2}
	\vspace{-0.4cm}
\end{figure}

In Fig. \ref{fig:foobar}, we illustrate three different codes. The rows and columns of a coding matrix denote the time slot $t$ and information packet $i$ indices, respectively. At a given time slot (corresponding to a particular row of the coding matrix), a coded packet, which is random linear combination of information packets $p_i$ is transmitted, as shown in Fig. \ref{fig:foobar}-\mm{(a), (c) and (d)}. \mm{In Fig. \ref{fig:foobar}-(b), the packets are not coded. Hence, each individual information packet $p_i$ is transmitted at a given time slot.} The lost packets are shown with crossed marks. Appended to the right of each coding matrix, we have the indices of decoded packets\off{(in order)} and the corresponding time slots.

In Fig. \ref{fig:foobar}-(a), we provide the coding matrix for a time-invariant streaming code. This scheme does not involve feedback from the receiver. Hence, the sender keeps transmitting random linear combinations of the information packets by enlarging the size of transmission window over time. \mm{In Fig. \ref{fig:foobar}-(b), we show the coding matrix for SR-ARQ protocol. In this protocol, the objective is to only retransmit the information packets that were erased, then re-order the packets at the receiver. That is, the sender at each time slot decides whether to send a new packet of information or retransmit the erased packet according to the feedback acknowledgments. For example, at time slot $11$, the sender receives an ACK for information packet $3$ that was transmitted at time slot $7$. Since the sender does not have the feedback for the packets transmitted between time slots $8$ and $11$, i.e. information packets $4-8$ as shown in the coding matrix, it decides to send the next information packet that is not transmitted yet, i.e. information packet $9$. At time slot $12$, the sender receives the NACK of erased information packet $4$. Therefore, it decides to resend information packet $4$. This protocol utilizes a sliding window to manage the throughput-delay tradeoff as given in \cite{bertsekas1992data}. In Fig. \ref{fig:foobar}-(c), we show the coding matrix for} a systematic code with feedback. This case, unlike streaming, is a generation-based scheme. %The coding buckets act as the Head of Line (HOL) generations in the most generation based scheme.
The solid colored lines show the receipt of the feedback at the sender, and their colors correspond to the given generation. For example, the feedback (NACK) for the first generation is received at time slot $8$. Then the sender knows that it needs to transmit another DoF. At time slot $13$ the sender receives the feedback for the first and second generations, and hence the sender knows the second generation is successfully transmitted, and $1$ more DoFs is required for the receiver to be able to successfully decode the first and the second generations in order.\off{Since the packets have to be decoded in order, the sender should prioritize the generations according to the feedback.At time slot $13$ the sender receives the feedback for the second and third generations, and hence the sender knows the second generation is successfully transmitted, and $2$ more DoFs is required for the receiver to be able to successfully decode the third generation. Since the packets have to be decoded in order, the sender should prioritize the generations according to the feedback.} Finally, in Fig. \ref{fig:foobar}-(d), we present AC-RLNC with feedback (our scheme). In the proposed protocol, the sender tracks rate and missing DoF according to the feedback acknowledgments. The feedback is per coded packet, i.e. per slot. It then adds DoFs according to the rate of the channel (every RTT period). Furthermore, the sender adaptively decides if to add new information packet to the RLNC or send additional DoF according to the rate of the channel and the rate of the DoFs. %We bound the number of information packets in RLNC to limit the number of missing DoFs.
The details of how we choose the coding matrix will be clear in Sect. \ref{algo}.

\begin{table}[t!]%\small
\centering
\begin{tabular}{|l|l|l|}
\hline
{\bf Parameter} & {\bf Definition} \\
\hline
$t$ & time slot index  \\
$M$ & number of information packets\\
$p_i,\, i\in [1,M]$ & information packets\\
$c_t$ & RLNC to transmit at time slot $t$ \\
$\mu_i \in \mathbb{F}_z$ &  random coefficients\\
$e$ & total number of erasures in $[1,t]$ \\%transmitted packets (update according NACK)   \\
$\Dm$, $\DM$ & the mean and maximum in order\\
& delivery delay of packet\\
$p_e$ & erasure probability \\
$m_d$ & number of DoFs needed by the\\ & receiver to decode $c_t$ \\%(update according NACK) \\
$a_d$ & DoF added to $c_t$  \\%(update according NACK and FEC) \\
$d$ & rate of DoF ($m_d/a_d$) \\
$r$ & rate of the channel\\
%& (e.g. for BEC, $1-p_e=1-e/t$) \\
$th$ & throughput-delay tradeoff parameter\\
$r-d>th$ & retransmission criterion \\ %adjusted to choose the throughput-delay trade-off
$\RTT=k+1$ & round-trip time \\
$w_{\min}$ & index of the first information packet\\
& in $c_t$\\
EW& end window of $k$ new packets \\
$m$ & number of FEC to add per window \\
$\ov$ & maximum number of information \\
& packets allowed to overlap\\
$w\in\{1,\hdots,\ov\}$ & effective window size \\
E$\ov$W & end overlap window of maximum\\
&  new packets\\
\hline
\end{tabular}
\vspace{0.2cm}
\label{fig:table}
\caption{AC-RLNC algorithm: symbol definitions.}
\vspace{-0.8cm}
\end{table}

The scheme presented in this paper is related to fountain codes \cite{mackay2005fountain,sejdinovic2009expanding}, for which the code design very much depends on the choice of degree distributions. %\tof{Do you need to optimize degree distributions here?}
Different from fountain codes which do not exhibit a fixed rate, AC-RLNC is not rateless since the window size and hence the number of information packets to be combined at any time slot is bounded. It is also different from a block code with fixed dimension or message length. Unlike systematic codes, AC-RLNC do not restrict the coding into different generations. Furthermore, it is causal\footnote{Coding schemes can be non-causal. Block codes make for erasures that have not yet been seen, rather than reaching to them in the way, say a rateless code does.}.

%%%%%%%%%%%%%%%%%%%%%%%%%%%%%%%%%%%%%%%%%%%%%%%%%%%%%%%%%%%%%%%
\section{System Model and Problem Formulation}\label{sys}
%%%%%%%%%%%%%%%%%%%%%%%%%%%%%%%%%%%%%%%%%%%%%%%%%%%%%%%%%%%%%%%
We consider an adaptive, causal, real-time, slotted, point-to-point communication model with feedback (i.e. AC-RLNC) for low-latency constraints. Fig. \ref{fig:window_coding_v2} shows the system model. We consider the case where erasures may occur over the forward channel. In each time slot $t$ the sender transmits a coded packet $c_t$ over the forward channel to the receiver. To simplify the technical aspects and focus on the key methods, we assume that the feedback channel is noiseless. The receiver acknowledges the sender for each coded packet transmitted over the feedback channel. Denote by $t_{p}$ the maximum propagation delay over any channel, and by $t_{d}=|c_t|/r$ the transmission delay of the packet, where $|c_t|$ is the size of each coded packet in bits and $r$ denote the rate of the channel in bits/second. Since the sender transmits one coded packet per time slot, $t_d$ is also the duration of a time-slot. We assume the size of the acknowledgements is negligible compared to the packet size. Hence, the round-trip time is
\begin{align}
\RTT = t_{d} + 2t_{p}.
\end{align}
For each $t$-th coded packet it transmits, the sender receives a reliable $\ACK(t)$ or $\NACK(t)$ after $\RTT$.

For the forward channel we consider two types of channels. The first case is a binary erasure channel (BEC), with an i.i.d. erasure probability of $\epsilon$ per slot. Let $n$ denote the total number of transmission slots. Thus, on average, $n(1-\epsilon)$ slots are not erased and are available to the receiver. The second is a Gilbert-Elliott (GE) channel with erasures. This channel is a binary-state Markov process with good ($G$) and bad ($B$) states. It introduces bursts into the channel, and hence isolates erasures \cite{gilbert1960capacity,sadeghi2008finite,malak2019tiny}. Let ${\bf P}$ be the probability transition matrix of the GE channel, which is given as
\begin{align}
    {\bf P}=\begin{bmatrix}
    1-q & q\\
    s & 1-s\\
    \end{bmatrix},
\end{align}
where the first (second) row represents the transition probabilities from the good (bad) state. The stationary distribution satisfies $\pi_G=\frac{s}{q+s}$, and $\pi_B=1-\pi_G$. Let $\epsilon_G=0$ and $\epsilon_B=1$ be the erasure rates at the corresponding states. The average erasure rate is $\epsilon=\pi_B$. Note that $1/s$ is the average erasure burst. Hence, burst erasures occur when $s$ is low.

With the proposed AC-RLNC scheme, with parameters $r$, $n$ and $\RTT$, our goal is to minimize the in order delivery delay, $D$, and maximize the throughput, $\eta$.

%%%%%%%%%%%%%%%%%%%%%%%%%%%%%%%%%%%%%%%%%%%%%%%%%%%%%%%%%%%%%%%
\section{Adaptive Coding Algorithm}\label{algo}
%%%%%%%%%%%%%%%%%%%%%%%%%%%%%%%%%%%%%%%%%%%%%%%%%%%%%%%%%%%%%%%
In this section, we detail the AC-RLNC given in Algorithm \ref{causalRLNCalgo}. AC-RLNC differs from SR-ARQ in terms of the structure of the feedback and the retransmission criterion, which is mainly affected by the feedback, the window size and the forward channel conditions. Namely, the sender tracks the channel rate and the DoF rate at the receiver via the feedback acknowledgments, and selects if to add a new information packet to the next coded RLNC packet it sends\footnote{\label{note2}\mm{We consider a packet-level communications model. The bit-level communication can be considered as a special case of our model.}}. Fig. \ref{fig:window_coding_v2} shows the system model and the adaptive causal encoding process of the AC-RLNC protocol with an effective window size $w$. The symbol definitions and an example realization of AC-RLNC are provided in Table \ref{fig:table} and Fig. \ref{fig:foobar}-(d), respectively. In Appendix \ref{App:adaptivecausalRLNC} we provide a detailed description of the example with the coding matrix as given in Fig. \ref{fig:foobar}-(d). The main components of the packet level protocol are described next.

\begin{algorithm}[t!]\small
\begin{algorithmic}
\While{DoF($c_t$)$>0$}
\State $t = t+1$
\State Update $d=\frac{m_d}{a_d}$ according to the known encoded packets
\If{no feedback}
    \If{EW}
        \State Transmit the same\off{\footnotemark[\ref{note1}]} RLNC $m$ times: $a_d=a_d+m$
    \Else
        \State Add new $p_i$ packet to the RLNC and transmit
    \EndIf
\ElsIf{feedback NACK}
    \State $e=e+1$
    \State Update $m_d$ according to the known encoded packets
        \If{$r-d > th$}
            \If{not EW}
            \State Add new $p_i$ packet to the RLNC and transmit
            \Else
            \State Transmit the same RLNC $m$ times: $a_d=a_d+m$
            \EndIf
         \Else  %$r-d < th$
         \State Transmit the same RLNC $a_d=a_d+1$
           \If{EW}
           \State Transmit the same RLNC $m$ times: $a_d=a_d+m$
           \EndIf
        \EndIf
\Else % (feedback ACK)
      \If{EW}
         \State Transmit the same RLNC $m$ times: $a_d=a_d+m$
      \EndIf

      \If{$r-d < th$}
         \State Transmit the same RLNC
      \Else
         \State Add new $p_i$ packet to the RLNC and transmit
      \EndIf
\EndIf
\State Eliminate the seen packets from the RLNC
\If{DoF($c_t$)$>\ov$}
   \State transmit the same RLNC until DoF($c_t$)$=0$
\EndIf
\EndWhile
 \caption{Adaptive causal RLNC for packet scheduling.\label{causalRLNCalgo}}
\end{algorithmic}
\end{algorithm}

%%%
\paragraph{Tracking Channel Behavior} \label{ch_behavior}
In order to adapt the causal algorithm the sender estimates the actual channel behavior (i.e. erasure probability and its variance, and the burst pattern) using the feedback acknowledgements. To focus on the key methods, we concentrate on the BEC here. However, the same methods apply to the GE channel or more general channel models.
First, the sender counts the actual number of erasures $e$ at each time slot $t$. However, this number is computed based on the acknowledgments corresponding to time $t-\RTT$, and hence is an estimate. If there was no delay in sender's estimate, it would achieve the capacity. The sender also keeps an estimate of the standard deviation of erasures due to the errors estimation caused by the round-trip delay.
The probability of erasure at slot $t$, $p_e=e/(t-\RTT)$, is the fraction of erasures over the time interval $[1,t-\RTT]$. Hence, the sender can compute the channel rate as $r=1-p_e$, and the standard deviation for BEC as $\sqrt{p_e(1-p_e)}$. Similarly, in the GE channel, we can estimate the actual burst pattern of the channel to adapt the algorithm.

%%%
\paragraph{Window Structure and Different Generations}
The sliding window structure is determined by the $\RTT$ and the maximum number of information packets we allow to overlap, $\ov$.

Denote the end of the window, EW, by \RTT-1. Hence, the sender transmits $k=\RTT-1$ new information packets (using coded packets $c_t$) before it repeats the same RLNC combination $m=\lfloor p_e \cdot k \rceil$ times\footnote{\label{note1}By the same RLNC combination, we mean that the information packets are the same, but with new random coefficients.}. This is because $p_e \cdot k$ coded packets will be erased on average per $k$ transmitted packets over the channel. Thus, by using FEC of $m$ coded packets in advance, the mean in order delay is reduced. In Fig. \ref{fig:foobar}-(d), which describes an example realization of AC-RLNC, we show EW for each generation using different colors.

Denote by E$\ov$W the end of overlap window, i.e., the number of information packets we allow to overlap. By limiting the maximum value of E$\ov$W and transmitting the same RLNC combinations at E$\ov$W, we can bound the maximum in order delay of the algorithm. The window structure affects the in order delivery delay. We provide analytical results for in order delivery delay in Sec.\ref{bounds}.

Denote by $w \in \{1, \ldots, \ov\}$ the effective window size for the coded combination $c_t$ at time slot $t$. Let $w_{\min}$ denote the actual index of the first information packet in $c_t$, i.e. $w_{\min}-1$ is the index of the last information packet declared as decoded at the sender. The value of $w$ is adaptively determined based on the retransmission criterion, as we define in Sec. IV-f\off{\ref{criterion}}.

%%%
\paragraph{Coded Packet}
This is a causal random linear combination of a subset of information packets\footnote{The information packets are made available to the transport layer (such as a TCP/IP transport layer or an OSI transport layer (Layer 4)) and are not declared by the sender as decoded at the receiver according to the acknowledgments received over the feedback in the previous time slot $t-1$.} within the effective window. The RLNC coded packet at time slot $t$, $c_t$ is given as a function of information packets by
\begin{align}
\label{ct_formula}
c_t = \sum_{i=w_{\min}}^{w_{\min}+w-1} \mu_i \cdot p_i,
\end{align}
where $\mu_i \in \mathbb{F}_z$ are the random coefficients, and $\{p_i\}_{i= w_{\min}}^{w_{\min}+w-1}$ is the subset of information packets within the effective window. We denote by DoF($c_t$) the DoF contained in $c_t$, i.e. the number of distinct information packets in $c_t$. The advantage of using RLNC was given in Sec. \ref{back}.

%%%
\paragraph{Tracking the Channel Rate and the Rate of DoF}
We assume the feedback channel is noiseless\footnote{In the case that the feedback channel is noisy, we can consider a cumulative feedback (for example as given in \cite{malak2019tiny}) that  provides information about the DoF of seen packets, which implies that the previous packets had already been seen if the current packet has been seen.}. Hence, for each $t$-th coded packet transmitted by the sender, the receiver reliably transmits $\ACK(t)$ or $\NACK(t)$ after $\RTT$. Upon the acknowledgements, the sender tracks the actual rate of the channel (as given in Sec. IV-a\off{\ref{ch_behavior}}) and the DoF rate $d=m_d/a_d$, where $m_d$ and $a_d$ denote the number of DoFs needed by the receiver to decode $c_t$, and the number of DoFs added to $c_t$, respectively.

Now we provide one way to calculate $m_d$ in order to manage the delay-throughput tradeoff. Let $\mathcal{A}$ and $\mathcal{N}$ be the sets of ACKs and NACKs within the current window $\mathcal{W}$, respectively.
Moreover, let $\mathcal{F}$ and $FB-\mathcal{F}$, be the sets of slots in which FEC and FB-FEC are sent in the current window $\mathcal{W}$, respectively. In Sec. IV-e\off{\ref{fec}} we will consider the insertion of FEC and FB-FEC slots. Let $\mathcal{P}_t$ denote the set of packets in the coded packet sent in slot $t$. Hence,
    $$m_d = \#\{t : (t \in \mathcal{N} \  \land t \notin FB-\mathcal{F} \  \land t \notin \mathcal{F} ) \  \land \  (\mathcal{W} \cap \mathcal{P}_t) \neq \emptyset\}$$
is the number of NACKed slots in which the coded packets contain new information packets in the current window. These coded packets do not correspond to the FEC or FB-FEC packets.

The sender adds DoFs, $a_d$, by FEC and FB-FEC in the current window $\mathcal{W}$ as we will describe in Sec. IV-e\off{\ref{fec}}. Hence, the number of added DoFs given by,
\begin{align}
a_d = \#\{t : (t \in \mathcal{F} \  \lor \  t \in FB-\mathcal{F}) \  \land \  (\mathcal{W} \cap \mathcal{P}_t) \neq \emptyset\}\nonumber
\end{align}
is the number of slots in which FEC or FB-FEC is transmitted in the current window. In Appendix \ref{App:adaptivecausalRLNC} we provide a detailed description of the example given in Fig. \ref{fig:foobar}-(d) with details on how $m_d$ and $a_d$, and hence $d$ is calculated.

%%%
\paragraph{Insertion of FEC and FB-FEC}\label{fec}
We include two different forward error correction (FEC) mechanisms to add DoFs according to the actual rate of the channel. One is a priori and the second one is posteriori. This on one hand, will provide to the sender sufficient number of DoF to be able to decode the coded packets, minimizing the in order delivery delay. This, on the other hand, will minimize the redundant packets sent by the sender to maximize the throughput.

The first is a priori forward error correction mechanism, denoted by FEC, is sent DoFs in advanced decision according to the average channel rate. Upon the reception of the feedback, if the sender is at the end of the window (i.e. EW), it repeats the same RLNC combinations $m$ times. Note that we determine the number of FECs $m$ adaptively, according to the average erasure rate $e/t$ calculated by the information given over the feedback channel. However, we can adjust the number of $m$ to manage the delay-throughput tradeoff. Increasing $m$ we may reduce the delay. If the sender transmits redundant DoFs (due to the variation in the estimation during the round trip delay) that are not required by the receiver, the throughput will reduce.

The second is posteriori forward error correction mechanism denoted by feedback-FEC (FB-FEC). The insertion of the DoFs is determined by a retransmission criterion defined in Sec. IV-f\off{\ref{criterion}}. Using this FB-FEC mechanism the algorithm guarantees that the receiver obtains sufficient DoFs to be able to decode the coded packets, and maximizes throughput, which however is at the expense of increased in order delay than the duration of the effective window.
\begin{figure*}[t!]
    {\includegraphics[width=1\textwidth]{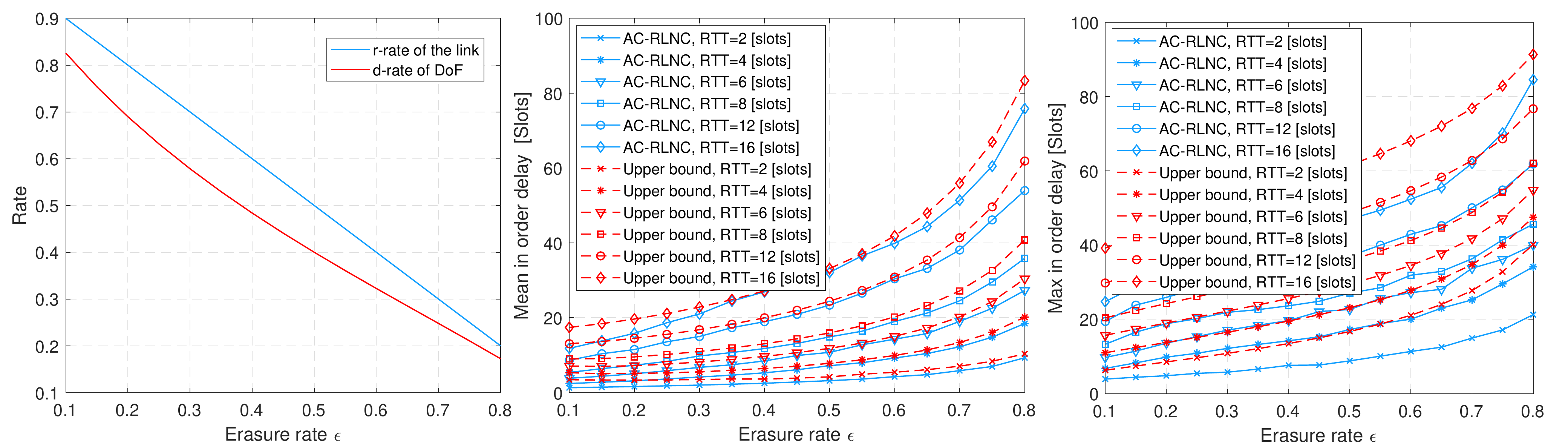}}
	\caption{Delay bounds. Rate (left), mean in order delay (middle), and maximum in order delay (right) for $o=2k$ and $P_e=10^{-3}$. The bounds (red dashed lines) presented have a good agreement with the AC-RLNC simulation results (blue solid lines) where $th=0$. \mm{Note that an RTT of two time slots is the theoretical minimum that can be achieved in a point-to-point communication system with feedback.}}
	\label{fig:DB}
\end{figure*}

%%%
\paragraph{Transmission Criterion}\label{criterion}
The retransmission criterion depends on whether the feedback is an ACK or a NACK. If the channel rate $r$ is sufficiently higher than the required DoF rate $d$ (which is given by the ratio of the number of DoFs needed to decode $c_t$ and the number of DoFs added to $c_t$), then the retransmission condition is satisfied:
\[
    r-d>th,
\]
where $th$ is the threshold. In this case, a new packet $p_i$ is added to the random linear combination if it is not the EW, and otherwise the same random linear combination is transmitted.

By setting $th=0$, the retransmission criterion becomes $r>d$. In this case, the algorithm will track the average rate of the channel. However, we can set the threshold $th$ adaptively according to the maximum in order delivery delay requirements of the applications, and the standard deviation of the erasure events. For example, in order to support the burst of erasures and lower the maximum in order delivery delay, we can compute the second moment of the erasures (and denote by $v_e$) such that the threshold is set to be $th= \sqrt{v_e}$. In general, we can choose the threshold adaptively such that $th \leq \sqrt{v_e}$ to manage the throughput-delay tradeoffs. Moreover, one can use the burst pattern to adapt the transmission criteria, which is left as future work.

%%%%%%%%%%%%%%%%%%%%%%%%%%%%%%%%%%%%%%%%%%%%%%%%%%%%%%%%%%%%%%%
\section{Analytical Results for Delay and Throughput}\label{bounds}
%%%%%%%%%%%%%%%%%%%%%%%%%%%%%%%%%%%%%%%%%%%%%%%%%%%%%%%%%%%%%%%
In this section we derive upper bounds for the mean and maximum in order delivery delay and throughput for AC-RLNC proposed in Sect. \ref{algo}. First, note that in the code suggested herein we bound the number of distinct information packets in $c_t$ by $\ov$.
Limiting the maximum number of information packets that can overlap we reduce the mean and limit the maximum in order delay. Hence, in the analytical results concerning the in order delay, we consider the maximum lengths of the effective window and the end window of $\ov$ new packets, denoted by $w_{\max}$ and E$\ov$W, respectively.

%%%%%%%%%%%%%%%%%%%%%%%%%%%%%%%%%%%%%%%%%%%%%%%%%%%%%%%%%%%%%%%
\subsection{An Upper Bound to Mean In Order Delivery Delay}
%%%%%%%%%%%%%%%%%%%%%%%%%%%%%%%%%%%%%%%%%%%%%%%%%%%%%%%%%%%%%%%
Following the notation in Table \ref{fig:table}, and Algorithm \ref{causalRLNCalgo}, the rate of DoF is given by
\begin{align*}
d = \frac{m_d}{a_d},
\end{align*}
 where the number of DoFs needed by the receiver to decode $c_t$, i.e. $m_d$, satisfies
\begin{align*}
m_d = \ov\epsilon,
\end{align*}
and the DoF added $a_d$ to $c_t$ satisfies
\begin{align*}
a_d = \frac{1}{1-\epsilon} m_d + \epsilon m_e = \frac{1}{1-\epsilon} \ov\epsilon + \epsilon \ov\epsilon,
\end{align*}
where $m_e = \ov\epsilon$ is the effective number of the DoFs required by the receiver, and $k =  \RTT-1$ is defined in Table \ref{fig:table}.

The condition for retransmission is $r>d+th$, such that
\begin{eqnarray*}
1-r &=&  \epsilon \\
    &<& 1-d-th.
\end{eqnarray*}
Hence, we get that $\epsilon<\epsilon_{\max}\leq 1-d-th$, where $\epsilon_{\max}$ is an upper bound to the erasure probability of the forward channel.

To be able determine the mean and maximum in order delivery delays, $\Dm$ and $\DM$, respectively, we next need to calculate the following probabilities:

\noindent (1) {\bf Condition for starting a new generation.} Probability that it is E$\ov$W is:
\begin{equation}
\label{cond_EoW}
    \mathbb{P}_{E\ov W} = (1-\epsilon_{\max})^{\ov}.
\end{equation}

\noindent (2) {\bf Condition for retransmission.}  Probability that $r>d+th$ over two windows is:
\begin{equation*}
   \mathbb{P}_{r>d+th} = \sum_{i=1}^{\lfloor \ov\epsilon_{\max}\rfloor} \binom{\ov}{i}\epsilon^{i}(1-\epsilon)^{\ov -i}.
\end{equation*}
Finally, we calculate upper bounds for the mean in order delay for BEC under different feedback states.

%%%
\paragraph{No Feedback}
Given that there is no feedback, we have
\begin{multline}
\label{no_FB_BEC}
 {\Dm}_{[no\mbox{ }feedback]} \leq \frac{1}{1-\epsilon_{\max}}\\ \Big[ \mathbb{P}_{E\ov W}(m_e+k)+(1-\mathbb{P}_{E\ov W})\RTT \Big],
\end{multline}
in which if it is E$\ov$W (i.e. (\ref{cond_EoW}) is satisfied), the same RLNC is transmitted $m_e$ times, yielding a delay of $m_e+k$, and if it is not E$\ov$W, a new $p_i$ is added to the RLNC and transmitted, yielding a delay of $\RTT=k+1$. The scaling term in the upper bound $\frac{1}{1-\epsilon_{\max}}$ is due to the maximum number of retransmissions needed to succeed in the forward channel. These steps follow from Algorithm \ref{causalRLNCalgo} by replacing EW with E$\ov$W.

%%%
\paragraph{NACK}
When the feedback message is a NACK, we have
\begin{multline}
\label{NACK_BEC}
{\Dm}_{[nack\mbox{ }feedback]} \leq \epsilon_{\max}\frac{1}{1-\epsilon_{\max}}\\ \Big[ \mathbb{P}_{r>d+th}\big[ (1-\mathbb{P}_{E \ov W})\RTT+\mathbb{P}_{E\ov W}(m_e+k)\big] \\  +(1-\mathbb{P}_{r>d+th})\big[ \RTT + \mathbb{P}_{E\ov W}(m_e+k) \big]\Big],
\end{multline}
which follows from that given $r>d+th$, which is with probability $\mathbb{P}_{r>d+th}$, the mean in order delay is same as the case when there is no feedback, and if $r\leq d + th$, which is with probability $(1-\mathbb{P}_{r>d+th})$, the same RLNC is transmitted, and then if it is E$\ov$W, the same combination is retransmitted $m_e$ times. Here, the scaling term $\frac{1}{1-\epsilon_{\max}}$, similarly to the no feedback case, is due to the maximum number of retransmissions needed to succeed in the forward channel.

%%%
\paragraph{ACK}
When the feedback message is an ACK, we obtain
\begin{multline}
\label{ACK_BEC}
{\Dm}_{[ack\mbox{ }feedback]} \leq (1-\epsilon_{\max})\Big[ \mathbb{P}_{E\ov W}(m_e+k)\\  + (\mathbb{P}_{r>d+th})\RTT  + (1-\mathbb{P}_{r>d+th})\RTT\Big],
\end{multline}
which is due to that when it is the E$\ov$W, the same RLNC is transmitted $m_e$ times. Then, if $r\geq d + th$, which is with probability $\mathbb{P}_{r>d+th}$, then a new packet is added to the RLNC and transmitted. Otherwise, the same RLNC is transmitted, and both cases yield a delay of $\RTT=k+1$. Since the feedback is an ACK, the mean in order delay we compute is scaled by $1-\epsilon_{\max}$, which is a lower bound on the probability of getting an ACK with perfect feedback.

Given the round trip delay, we do not have feedback in the first transmission window. Hence, to normalize the effect of not having feedback, we use a normalization parameter $\lambda $ denoting the fraction of time there is feedback such that
\begin{multline*}
\Dm \leq \lambda {\Dm}_{[no\mbox{ }feedback]}\\
+(1-\lambda )({\Dm}_{[nack\mbox{ }feedback]}+{\Dm}_{[ack\mbox{ }feedback]}).
\end{multline*}

We use this upper bound in Sect. \ref{simulation} to validate our numerical simulations where $\ov=2k$.

%%%%
For the GE channel model, erasure events only occur when the forward channel is in state $B$. Therefore, the average number of transmissions in the forward channel can be computed using the following relation
\begin{align}
\label{GE_average_transmissions}
\pi_G+\sum\limits_{k=2}^{\infty} \pi_B(1-s)^{k-2} s k
\end{align}
where the first term denotes the fraction of time the channel is state $G$, for which only one transmission is required, i.e. $k=1$, and the term inside the summation  denotes the probability that the channel starts in state $B$ and transits to state $G$ in $k\geq 2$ slots. Evaluating (\ref{GE_average_transmissions}), along with $\pi_{B}=1-\pi_{G}=\epsilon$ the number of retransmissions needed to succeed in the forward channel is
\begin{align}
1+\epsilon \left[\left(\frac{1}{1-s}\right)\left(\frac{1}{s}-s\right)-1\right],\quad s\in(0,1].
\end{align}
%where $s=q(\epsilon^{-1}-1)$.
We can show that if $\frac{1}{s}-s-1 >\frac{1}{1-\epsilon}$, the number of retransmissions needed for GE channel is higher than the number of retransmissions for BEC. Since we mainly focus on the bursty GE channel model, i.e. when $s$ is small such that
\begin{equation}
    \left(\frac{1}{1-s}\right)\left(\frac{1}{s}-s\right)-1 >\frac{1}{1-\epsilon},
\end{equation}
we can show that the number of retransmissions required in this case will be higher compared to the BEC case. Using this consideration and the similar upper bounding techniques as in the case of BEC (see (\ref{no_FB_BEC}), (\ref{NACK_BEC}) and (\ref{ACK_BEC})), we can show that the upper bound for the mean in order delay for the GE channel is higher.
\begin{figure}
    %{\includegraphics[trim=10cm 8.5cm 1cm 6.55cm,clip,scale=0.62]{figures/max_bound2.jpg}}
    {\includegraphics[width =\columnwidth]{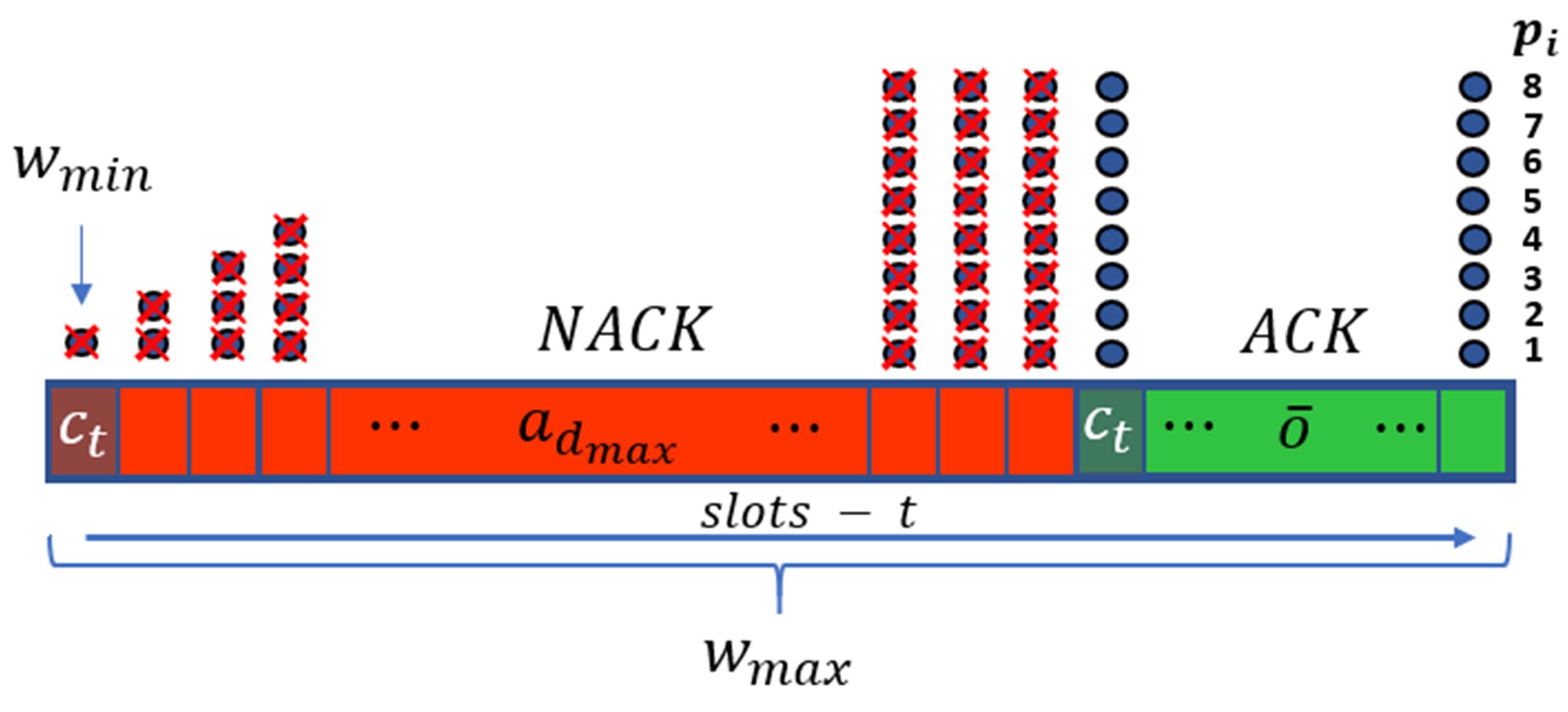}}
	\caption{Maximum in order delay bound. The red interval represents the slots with forward channel erasures, and the green one represents the slots that are not erased. In the left corner, $w_{\min}$ points out the slot of the first information packet in $c_t$ that is first transmitted. Since the sender transmits coded packets, once there are $\ov$ information packets in the coded packet, the order of erasures in the forward channel do not affect the in order delay. In this case the receiver will need $\ov$ DoFs to be able to decode a coded packet. Hence, the worst case is the bursty one in which right after the transmission of the first packet, the sender will add $\ov-1$ information packets to the coded packets that will not be delivered due to the consecutive erasures in the forward channel. In this case, all the $a_{d_{\max}}$ coded packets (which include at most $\ov$ information packets) are erased in the forward channel. Then, to decode these information packets, the receiver needs to obtain $\ov$ coded packets (green interval on the right). Hence, we define $w_{\max}$ interval length as $a_{d_{\max}}+\ov$.}
	\label{fig:max_delay}
\end{figure}

%%%%%%%%%%%%%%%%%%%%%%%%%%%%%%%%%%%%%%%%%%%%%%%%%%%%%%%%%%%%%%%
\subsection{An Upper Bound to Maximum In Order Delivery Delay}
%%%%%%%%%%%%%%%%%%%%%%%%%%%%%%%%%%%%%%%%%%%%%%%%%%%%%%%%%%%%%%%
In the algorithm proposed, the maximum number of information packets in $c_t$ is limited. Thus, when DoF($c_t$) $=\ov$, the sender transmits the same RLNC combination until all $\ov$ information packets are decoded. In this case, since each transmitted packet is a coded combination, any $\ov$ packets delivered at the receiver are sufficient to decode $c_t$. Let $w_{\max}$ denote the time interval between when the first information packet in $c_t$ is first transmitted and when all the $\ov$ information packets in $c_t$ are decoded at the receiver. This time interval may also include at most $a_{d_{\max}}$ coded packets that are erased in the forward channel, in addition to $\ov$ coded packets successfully delivered to the receiver. Hence, $w_{\max}= \ov + a_{d_{\max}}$ \mm{as presented in Fig. \ref{fig:max_delay}}. The maximum delay that the first information packet can experience is when there are $a_{d_{\max}}$ erasures first, and then the $\ov$ are successfully delivered next (i.e., when the channel is bursty).

Denote by $P_e$ the probability of error which is when there are more than $a_{d_{\max}}$ packets that are erased in $w_{\max}$. Hence,
\[
    \mathbb{P}_e \leq \epsilon^{a_{d_{\max}}} = \epsilon^{w_{\max}-\ov}.
\]
Rearranging terms results in
\[
    w_{\max} \geq \log_{\epsilon_{\max}}(\mathbb{P}_e) + \ov.
\]
Now, since the maximum number of missing DoFs in $w_{\max}$ is $\ov\epsilon_{\max}$, the maximum in order delay is bounded by
\[
    \DM \leq \ov\epsilon_{\max} + \log_{\epsilon_{\max}}(\mathbb{P}_e) + \ov
\]
for any selected error probability $\mathbb{P}_e$.

In Fig. \ref{fig:DB} we present the results for the rates of the DoF $d$ and the channel $r$ calculated at the sender, and the mean and the maximum in order delivery delays bounds $\Dm$ and $\DM$ for $\ov=2k$ and $P_e=10^{-3}$. The analytical results presented have a good agreement with the simulation results given in Sect. \ref{simulation}, where $th=0$.

%%%%%%%%%%%%%%%%%%%%%%%%%%%%%%%%%%%%%%%%%%%%%%%%%%%%%%%%%%%%%%%
\subsection{An Upper Bound to Throughput}
%%%%%%%%%%%%%%%%%%%%%%%%%%%%%%%%%%%%%%%%%%%%%%%%%%%%%%%%%%%%%%%
In the adaptive code proposed in Section \ref{algo}, the sender learns the rate of the channel and the rate of the DoF according to the acknowledgements obtained over the feedback channel. However, it is essential to note that due to the RTT delay, those rates at the sender are updated with delay. Hence, at time slot $t$, the actual retransmission criterion at the sender is calculated as
\begin{align}
\label{retransmission_critetion}
 r(t^-)-d(t^-)>th(t^-),
\end{align}
where $t^-=t-\RTT$. \mm{Hypothetically, if} $\RTT$ is less than 1 slot (i.e. $\RTT<1$), the adaptive code is able to obtain the rate of the channel.  \mm{However, in the non-asymptotic model considered, RTT delay is higher, i.e. $\RTT\geq 2$. Hence, we may} get degradation on the throughput due to the variations in the channel. This is because those variations are not reflected in the retransmission criterion at the sender in time slot $t$. By bounding the channel variance during RTT, we can provide bounds on the throughput. Different bounds for the variations of the channels are considered in the literature \cite{polyanskiy2010channel,polyanskiy2011dispersion}. Bounding the variance, we will obtain the maximum variation between the channel rate calculated at the sender for the retransmission criterion in (\ref{retransmission_critetion}) to the actual channel rate.
\begin{figure}
    {\includegraphics[width =\columnwidth]{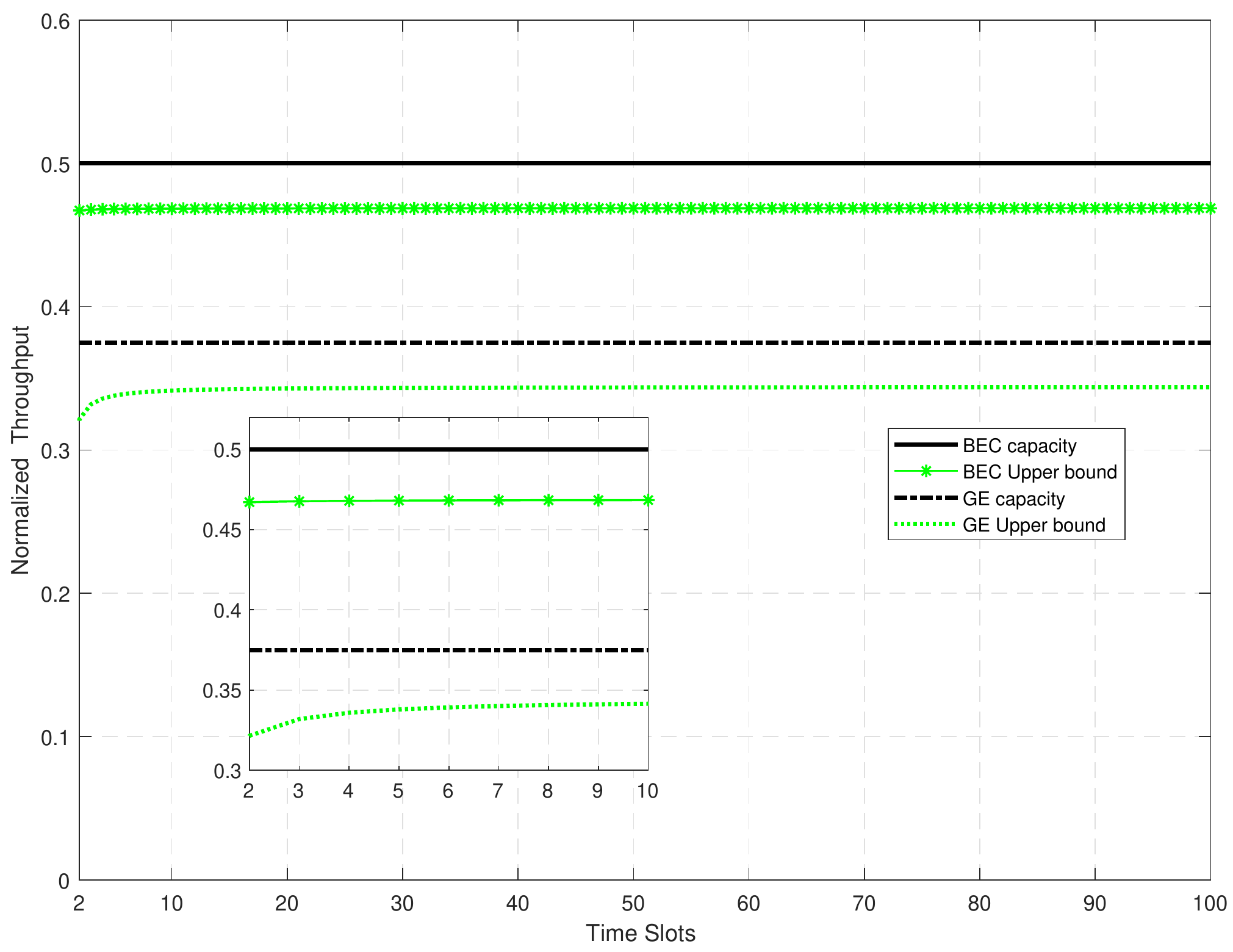}}
	\caption{Throughput upper bounds for BEC and GE channels with erasure probability of 0.5 ($\epsilon=0.5$). \mm{Note that the range of the abscissa is from $2$ (the theoretical minimum of the $\RTT$ delay) to $100$. Moreover, the throughput is not degraded by increasing $\RTT$. In the asymptotic regime the AC-RLNC code may attain the capacity.}}
	\label{fig:ThroughputError05BecGe}
\end{figure}

In the code suggested, the calculated rate sets the number of RLNC coded packets with the same or new information packets to be transmitted during the period of RTT. We denote by $\textbf{c}=(c_{t^-},\ldots,c_{t})$ the vector of the RLNC packets transmitted during a period of RTT transmissions given the estimated rate $r(t^-)$. Denote by $\textbf{c}^{\prime}=(c^{\prime}_{t^-},\ldots,c^{\prime}_{t})$ the vector of the RLNC packets transmitted if the actual rate of the channel $r(t)$ was available at the sender non-causally.
\begin{figure}
    {\includegraphics[width =\columnwidth]{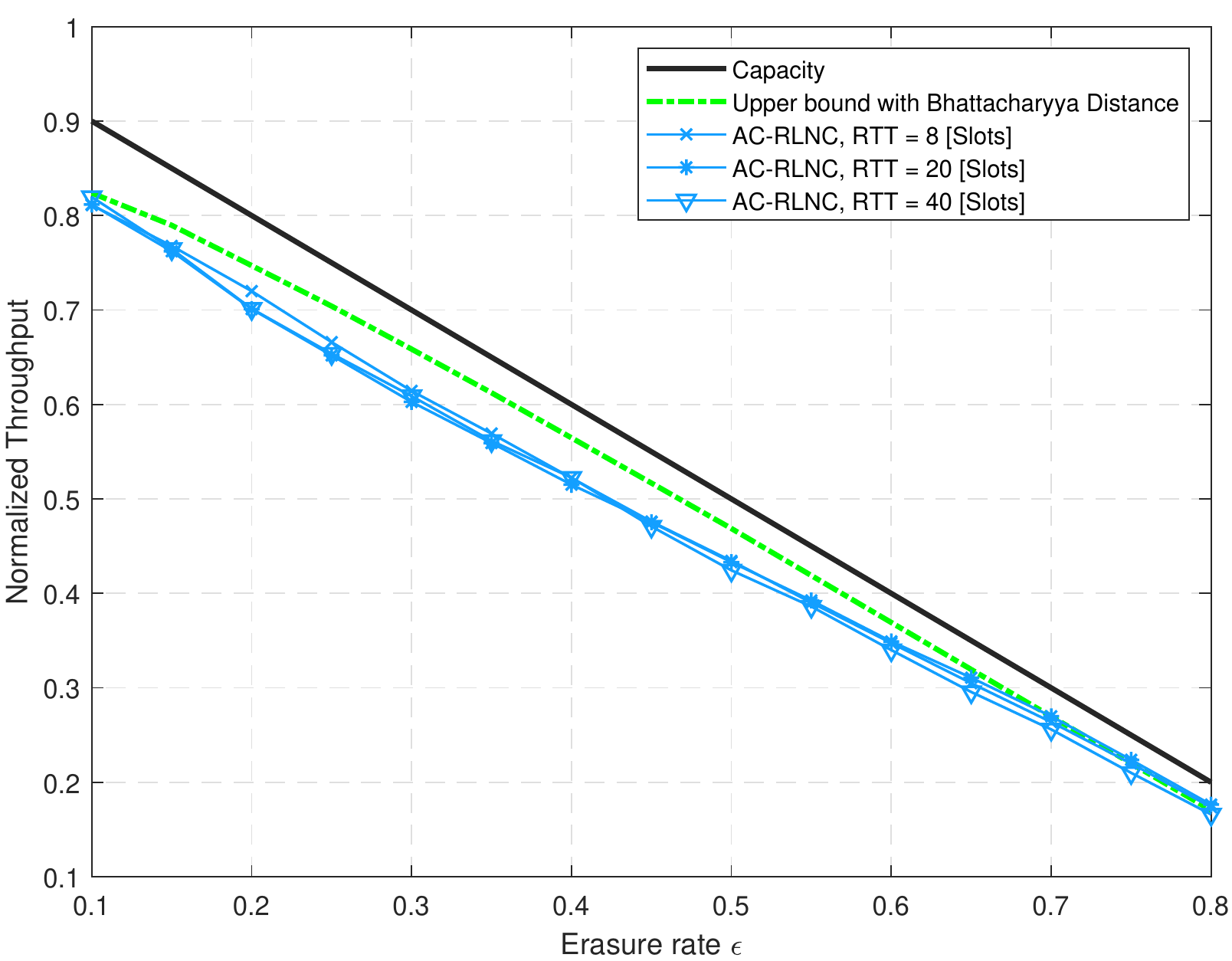}}
	\caption{Throughput upper bound for BEC.}
	\label{fig:Throughput}
\end{figure}
\begin{figure}
    {\includegraphics[width =\columnwidth]{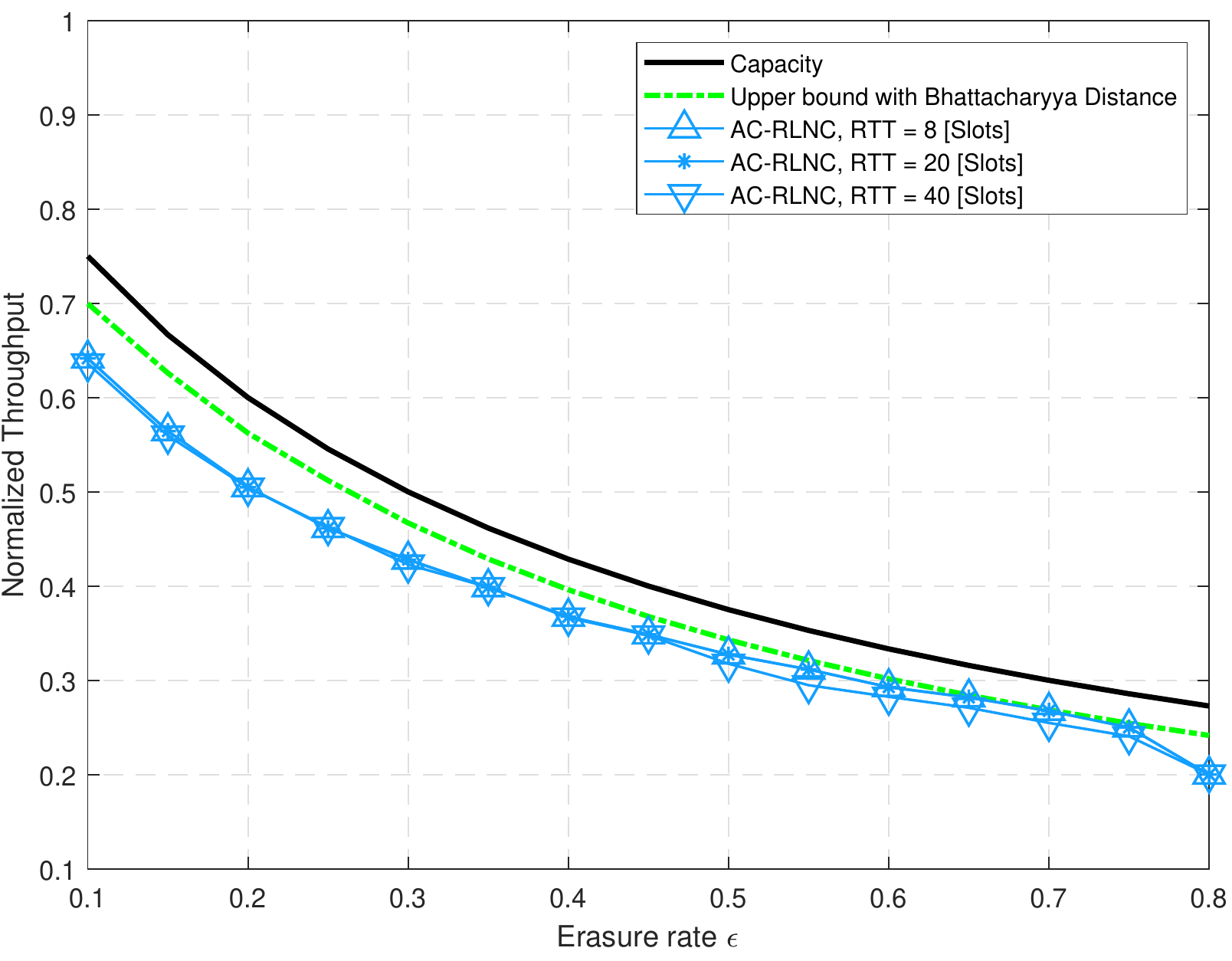}}
	\caption{Throughput upper bound for GE channel.}
	\label{fig:Throughput_GE}
\end{figure}

Now we consider the case where the actual rate of the channel, $r(t)$, at time slot $t$ is higher than the rate, $r(t^-)$, estimated at the sender at time slot $t^-$. In this case, the sender will transmit additional DoF (additional RLNC coded packets of the same information packets with different coefficients) which are not required at the receiver to decode. In the case that the estimated rate is lower than the actual rate of the channel, we don't lose throughput because the sender does not transmit redundant DoFs. However, since there are missing DoFs at the receiver to decode, the in order delay increases. The number of the additional DoF not required by the receiver, during RTT time slots, will be determined according to the distance between the number of packets not erased at the receiver given $r(t)$, to the estimated number of non-erased packets given $r(t^-)$.
\begin{figure*}[t!]
	\centering
	{\includegraphics[width=\textwidth]{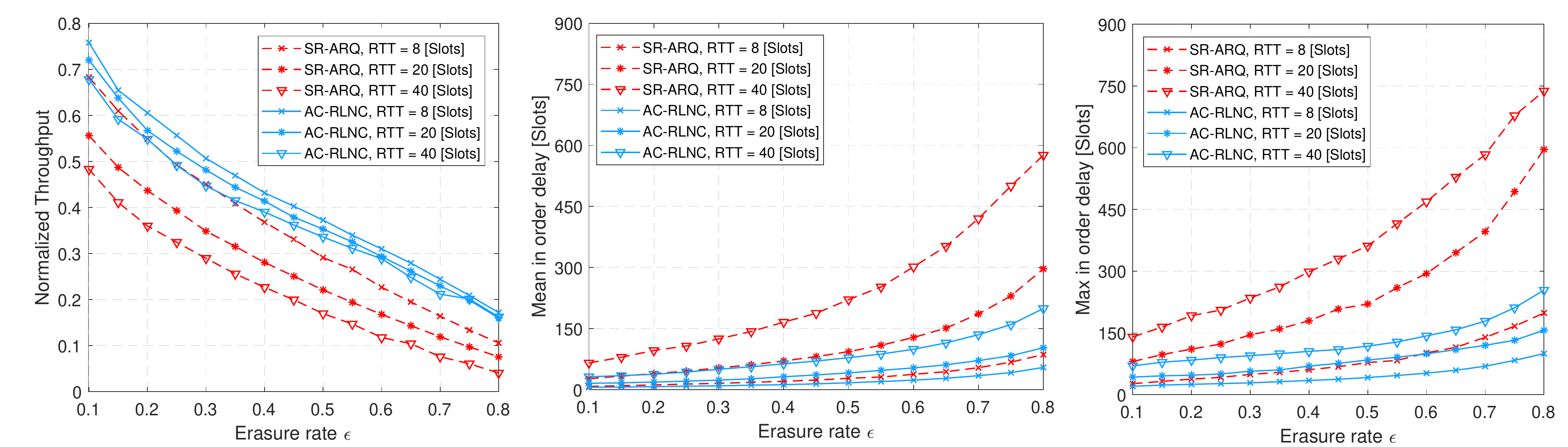}}
	\caption{AC-RLNC simulation for BEC (memoryless). Throughput (left), mean in order delay (middle), and maximum in order delay (right).}
	\label{fig:BEC}	
	{\includegraphics[width=\textwidth]{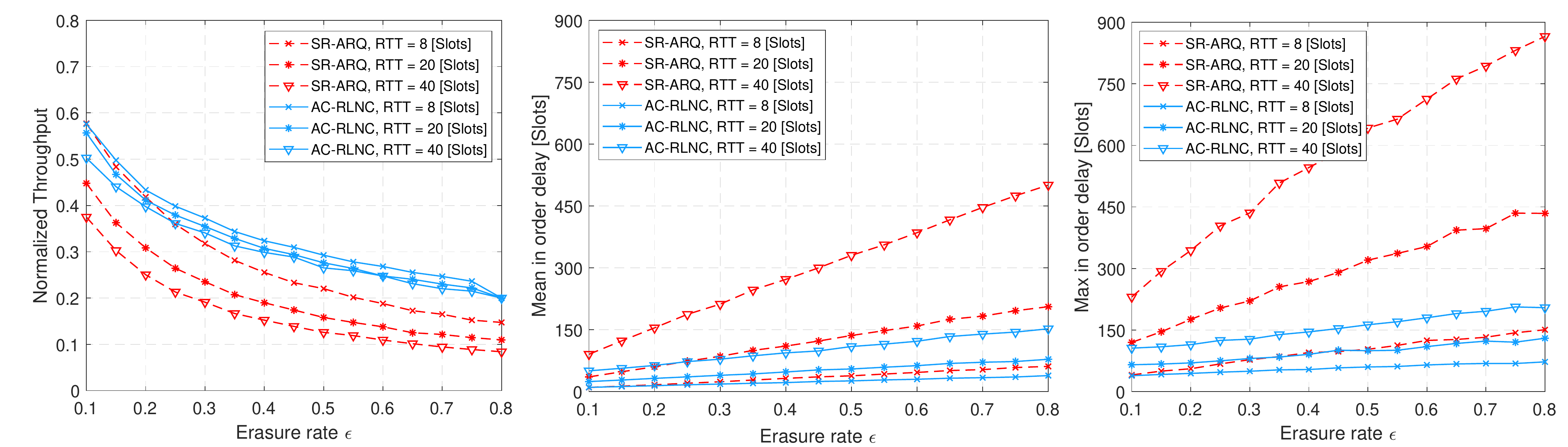}}
	\caption{AC-RLNC simulation for GE channel. Throughput (left), mean in order delay (middle), and maximum in order delay (right).}
	\label{fig:GE}	
\end{figure*}

We next define the Bhattacharyya distance which we use to provide an upper bound to the throughput. In \cite{shannon1967lower,viterbi2013principles,dalai2014elias,barg2005distance}, Bhattacharyya Distance was considered to bound the zero error capacity.
\begin{defi}
The Bhattacharyya distance is given by \cite{barg2005distance}
\begin{equation}\label{eq:BD}
    l(c,c^{\prime})=-ln(BC(c,c')),
\end{equation}
where $BC(c,c')$ is the Bhattacharyya coefficient, given as
\begin{align}\label{eq:BC}
  BC(c,c') = \sum_{y}\sqrt{W(y|c)W(y|c^{\prime})},
\end{align}
and, $W(y|c)$ and $W(y|c^{\prime})$ is the channel transition probabilities from inputs $c$ and $c^{\prime}$ to output vector $y$, respectively. We point out that bounds on the Bhattacharyya distance for codes can be immediately mapped to bounds on the reliability function for certain channels.
\end{defi}
\off{Thus Bhattacharyya distance is,
\begin{equation}\label{eq:BD}
    l(c,c') = -ln \Big(BC(c,c')\Big).
\end{equation}}

\begin{theo}{\bf An upper bound on the throughput of AC-RLNC is}
\begin{align}
    \eta\leq r(t^-) - l(r(t),r(t^-)),
\end{align}
where $l(\cdot,\cdot)$ is the Bhattacharyya distance.
\end{theo}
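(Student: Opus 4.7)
The plan is to upper-bound the throughput by decomposing it into the pacing rate set by the sender's stale estimate and a penalty that absorbs the mismatch between the estimate $r(t^-)$ and the true channel rate $r(t)$, and then to identify the penalty with the Bhattacharyya distance $l(r(t),r(t^-))$. The first step is to observe that both the a priori FEC schedule and the retransmission criterion (\ref{retransmission_critetion}) are driven by $r(t^-)$, so over every $\RTT$ window the sender injects new information packets into the RLNC stream at effective rate at most $r(t^-)$. This already gives a throughput ceiling equal to the leading term of the bound.

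Next, I would compare the transmitted vector $\mathbf{c}=(c_{t^-},\ldots,c_t)$ against the hypothetical non-causal vector $\mathbf{c}'=(c'_{t^-},\ldots,c'_t)$ that would be chosen with perfect knowledge of $r(t)$. Whenever the two schedules disagree on a slot, the sender either inserts an FEC/FB-FEC that the receiver does not need or omits one that it does; in either case the corresponding transmission fails to contribute to in-order delivery, and the count of such disagreements per $\RTT$ window is exactly the throughput loss that has to be subtracted from $r(t^-)$.

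The final step is to bound that disagreement count using the Bhattacharyya coefficient (\ref{eq:BC}). Interpreting $\mathbf{c}$ and $\mathbf{c}'$ as two channel inputs, $BC(\mathbf{c},\mathbf{c}')=\sum_y \sqrt{W(y|\mathbf{c})W(y|\mathbf{c}')}$ measures the overlap of their induced output distributions: a large overlap means the two schedules are statistically indistinguishable and almost no DoFs are wasted, while a small overlap means many slots are redundant. By the classical reliability-function characterization of the Bhattacharyya distance cited in the text \cite{shannon1967lower,viterbi2013principles,dalai2014elias,barg2005distance}, the per-symbol wasted-DoF rate is lower-bounded by $l(r(t),r(t^-))=-\ln BC(\mathbf{c},\mathbf{c}')$, and substituting this into the pacing-rate ceiling yields the stated bound.

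The main obstacle will be making this last identification rigorous: the Bhattacharyya distance is classically defined between channel codewords, not between rate estimates, so I would need an explicit rate-to-codeword mapping that sends $r(t)$ and $r(t^-)$ to concrete input distributions on the forward erasure (or Gilbert--Elliott) channel, verify that the resulting output distributions are those being compared in (\ref{eq:BC}), and check that the log-overlap coincides with the fraction of redundant DoFs per $\RTT$. The cited results supply the analytic tools, but the scheduling-to-codeword correspondence is the delicate piece that must be supplied from scratch before the subtraction $r(t^-)-l(r(t),r(t^-))$ can be justified as an upper bound on $\eta$.
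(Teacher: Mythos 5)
Your high-level decomposition matches the paper's: throughput is upper-bounded by the stale pacing rate $r(t^-)$ minus a mismatch penalty that gets identified with $l(r(t),r(t^-))$, with the causal vector $\mathbf{c}$ and the hypothetical non-causal $\mathbf{c}'$ playing the role of the two "codewords." Two things, however, separate your route from the paper's.

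First, the paper's proof (such as it is) rests on a concrete quantitative ingredient you never use: the variance bound $r(t)\leq r(t^-)+\sqrt{V(t)}/(k+m)$. This bound is not decorative — it is what the authors plug into $W(y|c)=r(t)$ and $W(y|c')=r(t^-)$ (treating the rates themselves as transition-probability parameters and summing over $\RTT$ slots in \eqref{eq:BC}), and it is what the BEC and GE corollaries instantiate with $V_{BEC}$ and $V_{GE}$ to make the bound computable. Without it you have no handle on $r(t)$ at all, and the $l$-term in your bound is not an actual quantity you can evaluate. Your "disagreement-count lower-bounded by $l$ via the reliability function" step is a different lemma from the paper's, and you correctly flag it as the delicate piece you cannot yet supply; the paper sidesteps this by the direct (if informal) rate-to-transition-probability assignment rather than going through a pairwise-error/reliability-function argument.

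Second, your claim that a slot where the sender "omits an FEC that it does need" also fails to contribute to throughput is contrary to what the paper itself says just before the theorem. The paper explicitly argues that when the sender's estimate errs in the direction of undersending redundancy, the transmission that goes out instead is a genuinely new information packet, so throughput is not lost — only in-order delay increases. Only the over-insertion direction (the sender transmits DoFs the receiver does not need, i.e. when $r(t)>r(t^-)$) costs throughput. So your symmetric "any disagreement is a wasted slot" accounting double-counts losses and, if made precise, would not align with the bound being proved. Correcting this, the penalty should be tied to the one-sided event that the true rate exceeds the stale estimate, which is exactly what the paper's variance bound \eqref{r_upper_bound} captures.
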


\begin{proof}
One reasonable approach to bound the distance between the number of erasures at the output vector, $y$, of the channel during the period of RTT is using the Bhattacharyya distance between the channel probability distributions at time $t$ and $t^-$.

Given the actual channel rate $r(t^-)$ at time slot $t^-$, the rate of the channel at time slot $t$ is bounded by
\begin{align}
\label{r_upper_bound}
    r(t)\leq r(t^-) + \frac{\sqrt{V(t)}}{k+m},
\end{align}
where $V(t)$ is the variance of the channel during the period of \RTT.

We consider an $\RTT$ period. Therefore, let $W(y\vert c')=r(t^-)$ and $W(y\vert c)=r(t)$, where we use the upper bound for $r(t)$ given in (\ref{r_upper_bound}) instead of computing it explicitly. Letting the summation range in (\ref{eq:BC}) be from $t=0$ to $\RTT-1$, we can upper bound the throughput as
\off{Since we consider an $\RTT$ period, using (\ref{eq:BD}), and letting $W(y\vert c)=r(t)$ (where we use the upper bound for $r(t)$ in (\ref{r_upper_bound}) instead of computing it explicitly), and $W(y\vert c')=r(t^-)$, and letting the summation range in (\ref{eq:BC}) be from $t=0$ to $\RTT-1$, we can upper bound the throughput as}
\begin{align}
    \eta \leq r(t^-) - l(r(t),r(t^-)).\nonumber
\end{align}
\end{proof}

\begin{cor}{\bf BEC.}
An upper bound on the throughput of AC-RLNC for BEC is
\begin{eqnarray*}
    \eta_{BEC}\leq  1-\epsilon - l(r_{BEC}(t),r_{BEC}(t^-)).
\end{eqnarray*}
\end{cor}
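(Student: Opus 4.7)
The plan is to derive the corollary as a direct specialization of the general throughput bound from the theorem to the BEC setting. First, I would invoke the theorem to obtain $\eta \leq r(t^-) - l(r(t), r(t^-))$, which holds whenever AC-RLNC operates over a memoryless forward channel with RTT-delayed feedback.

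Next I would identify the BEC's stationary channel rate. For a BEC with erasure probability $\epsilon$ the per-slot non-erasure probability equals $1-\epsilon$, and the sender's causal estimate $r_{BEC}(t^-) = 1 - p_e = 1 - e/(t-\RTT)$ introduced in Section IV-a is simply the empirical non-erasure frequency observed through the feedback up to slot $t-\RTT$. By the strong law of large numbers applied to the i.i.d. erasure sequence, this empirical rate converges almost surely to $1-\epsilon$, so to first order one may replace the leading term $r(t^-)$ by $1-\epsilon$, giving the claimed bound $\eta_{BEC}\leq 1-\epsilon - l(r_{BEC}(t), r_{BEC}(t^-))$.

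The Bhattacharyya distance term is retained explicitly because it captures the genuine throughput loss due to the RTT-scale mismatch between the instantaneous channel rate $r(t)$ and its delayed estimate $r(t^-)$, which cannot be made to vanish by long-run averaging. For BEC, the per-slot transition law is Bernoulli in (non-erasure, erasure) parametrized by the channel rate, so if one wishes a closed-form version of the bound it suffices to evaluate the Bhattacharyya coefficient as $BC(c, c') = \sqrt{r(t) r(t^-)} + \sqrt{(1-r(t))(1-r(t^-))}$ and substitute it into Definition~1 (of the Bhattacharyya distance).

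The main obstacle I foresee is justifying the substitution of $r(t^-)$ by $1-\epsilon$ rigorously in the non-asymptotic regime: one must verify that any finite-sample estimation error $|r_{BEC}(t^-) - (1-\epsilon)|$ either vanishes at the rate required for the bound to be tight, or can be absorbed into the Bhattacharyya term. A Hoeffding-type concentration bound on the empirical erasure rate should suffice, adding at most an $O(1/\sqrt{t-\RTT})$ correction, which is negligible compared to the RTT-scale fluctuation already captured by $l(\cdot,\cdot)$. Once this technicality is settled, the corollary follows immediately by plugging the BEC rate into the theorem.
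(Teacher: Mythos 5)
Your high-level plan is the same as the paper's: invoke the theorem, identify $r(t^-)$ with the stationary BEC rate $1-\epsilon$, and instantiate the Bhattacharyya term. But two points of substance differ. First, the paper does \emph{not} appeal to the strong law of large numbers or a concentration argument; it simply writes $r(t^-)=1-p_e(t^-)=1-\epsilon$ as an identification of the sender's estimate with the stationary BEC rate. Your worry about finite-sample estimation error is legitimate, but it is not an issue the paper's proof addresses or needs to address, because the paper treats $r(t^-)$ as exactly $1-\epsilon$ by fiat. Second, and more importantly, your proposed closed-form Bhattacharyya coefficient $BC(c,c')=\sqrt{r(t)r(t^-)}+\sqrt{(1-r(t))(1-r(t^-))}$ is the \emph{per-slot Bernoulli} coefficient, whereas the paper models $r(t)$ and $r(t^-)$ as \emph{Binomial distributions over the $\RTT$ window}, consistent with the theorem's proof (where the summation in~\eqref{eq:BC} runs over the $\RTT$ slots), giving
$BC_{BEC}(r(t),r(t^-))=\sum_{t=0}^{\RTT-1}\binom{\RTT}{t}\bigl(r(t)r(t^-)\bigr)^{t/2}\bigl((1-r(t))(1-r(t^-))\bigr)^{(\RTT-t)/2}$.
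Since the Binomial coefficient is essentially the $\RTT$-th power of your Bernoulli one, the resulting distance $l_{BEC}=-\ln BC_{BEC}$ scales roughly as $\RTT$ times your per-slot distance, so the paper's choice yields a noticeably tighter (smaller) throughput bound, and is the one plotted in Figs.~\ref{fig:ThroughputError05BecGe} and~\ref{fig:Throughput}. The paper also explicitly uses the Binomial variance $V_{BEC}(t)=\RTT(1-r_{BEC}(t^-))r_{BEC}(t^-)$ in the rate-deviation bound $r_{BEC}(t)\leq r_{BEC}(t^-)+\sqrt{V_{BEC}(t)}/(k+m)$, a step your proposal omits; this is what quantifies the ``$\RTT$-scale mismatch'' you mention qualitatively. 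Both BC choices are instances of Definition~1, so your corollary statement is formally correct, but it is a weaker bound than the paper's and does not reproduce the paper's numerical figures.
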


\begin{proof}
For the BEC channel, when $r(t^-)=1-p_e(t^-)=1-\epsilon$, we have,
\[
    r_{BEC}(t)\leq r_{BEC}(t^-) + \frac{\sqrt{V_{BEC}(t)}}{k+m},
\]
where $V_{BEC}(t)$ is the variance of the BEC during  the  period  of \RTT. Since the total number of successes in a sequence of $\RTT$ independent realizations of the channel, has a Binomial distribution with mean $\RTT r_{BEC}(t^-)$, $V_{BEC}(t)$ given as
\[
V_{BEC}(t) = \RTT(1-r_{BEC}(t^-))r_{BEC}(t^-).
\]

In BEC $r(t)$ and $r(t^-)$, respectively, be two Binomial distributions with the same parameter $\RTT$. Then, the Bhattacharyya coefficient is given by
\begin{align}
    &BC_{BEC}(r(t),r(t^-))\nonumber\\
    &=\sum_{t=0}^{\RTT-1} \binom{\RTT}{t}\big(r(t)r(t^-)\big)^{\frac{t}{2}}\big((1-r(t))(1-r(t^-))\big)^{\frac{\RTT-t}{2}},\nonumber
\end{align}
and the Bhattacharyya distance for BEC is given by
\begin{equation*}\label{eq:BD_BEC}
    l_{BEC}(r(t),r(t^-)) = -ln \Big(BC_{BEC}(r(t),r(t^-))\Big).
\end{equation*}
Hence the upper bound on the throughput is given by
\begin{align}
    \eta_{BEC}&\leq r(t^-) - l_{BEC}(r_{BEC}(t),r_{BEC}(t^-))\nonumber\\
    &= 1-\epsilon - l_{BEC}(r_{BEC}(t),r_{BEC}(t^-)).\nonumber
\end{align}
%using Bhattacharyya distance given in \eqref{eq:BD}.
\end{proof}

In Fig. \ref{fig:ThroughputError05BecGe}, we show the throughput upper bound of AC-RLNC as function of $\RTT$ for BEC with $\epsilon=0.5$. We can note that in this case the suggested solution may achieve around $93\%$ of the BEC channel \mm{capacity\off{\footnotemark[\ref{note2}]}.} In Fig. \ref{fig:Throughput}, we present the throughput upper bound for BEC as function of $\epsilon$. Note that in the simulation results presented in Fig. \ref{fig:Throughput}, by changing the parameters in the algorithm, we may obtain the upper bound. For example, to do that we can limit the maximum number of information packets we allow in the RLNC to $\ov=4k$. This will increase the in order delay; however, by the parameters of the algorithm suggested we can improve the throughput and manage the throughput-delay tradeoffs according to the constraint of each application.

\begin{cor}{\bf GE Channel.}
An upper bound on the throughput of AC-RLNC for GE channel \ton{where $\epsilon_G=0$} is
\begin{eqnarray*}
    \eta_{GE}\leq 1-\pi_B - l(r_{GE}(t),r_{GE}(t^-)).
\end{eqnarray*}
\end{cor}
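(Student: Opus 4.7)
The plan is to mirror the BEC corollary but replace the Binomial distribution of successful slots with the distribution induced by the two-state Markov chain governing the GE channel, and then invoke the main theorem. First I would identify $r_{GE}(t^-)$: since $\epsilon_G=0$ and $\epsilon_B=1$, the long-run fraction of non-erased slots equals the stationary probability of the good state, so $r_{GE}(t^-)=\pi_G=1-\pi_B$. This is the natural analog of the $1-\epsilon$ used in the BEC corollary, and it gives the first term of the bound directly from the general theorem.

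Next, I would derive the analog of the variance bound in (\ref{r_upper_bound}) for the GE channel. Let $N(t)$ denote the number of non-erased slots in the $\RTT$ period ending at $t$. Unlike the BEC case, $N(t)$ is not Binomial because the slot states are correlated through the transition matrix $\mathbf{P}$. I would compute $V_{GE}(t) = \mathrm{Var}(N(t))$ using the standard formula for sums of indicators of a two-state Markov chain in stationarity, which yields a closed form in terms of $q$, $s$, and $\RTT$ (essentially $\RTT \pi_G \pi_B$ plus a correction involving the second eigenvalue $1-q-s$ of $\mathbf{P}$). This gives $r_{GE}(t)\le r_{GE}(t^-)+\sqrt{V_{GE}(t)}/(k+m)$, which is exactly the ingredient the general theorem requires.

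Then I would instantiate the Bhattacharyya coefficient (\ref{eq:BC}) using $W(y\mid c')$ and $W(y\mid c)$ as the GE-induced distributions over the $\RTT$-length erasure pattern with rates $r_{GE}(t^-)$ and $r_{GE}(t)$, respectively. Concretely, the sum over $y$ ranges over the $2^{\RTT}$ binary erasure sequences, and each term factors along the Markov chain, so $BC_{GE}$ can be written as a product of transition-probability square roots (or, equivalently, a matrix power). Taking $-\ln$ yields $l_{GE}(r_{GE}(t),r_{GE}(t^-))$, and substituting into the main theorem gives
\begin{equation*}
\eta_{GE}\le r_{GE}(t^-)-l(r_{GE}(t),r_{GE}(t^-))=1-\pi_B-l(r_{GE}(t),r_{GE}(t^-)),
\end{equation*}
as required.

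The main obstacle I anticipate is the variance computation and the Bhattacharyya coefficient for the correlated GE output: both are no longer products of i.i.d. Bernoulli terms and must be handled via the Markov structure. A clean way to package this is to use stationarity to write the variance in closed form via the spectral gap of $\mathbf{P}$, and to express $BC_{GE}$ either as a sum over paths or as a product of $\RTT-1$ two-by-two Hellinger-type matrices. Once these are in hand, the remaining algebra plugs directly into the general theorem exactly as in the BEC corollary.
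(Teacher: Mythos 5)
Your proof follows essentially the same route the paper takes: identify $r_{GE}(t^-)=\pi_G=1-\pi_B$ from $\epsilon_G=0,\epsilon_B=1$, bound $r_{GE}(t)$ by $r_{GE}(t^-)+\sqrt{V_{GE}(t)}/(k+m)$, instantiate the Bhattacharyya distance, and plug into the main theorem. The one substantive difference is the variance step. The paper uses a law-of-total-variance calculation that collapses to $V_{GE}(t)=\bigl(\pi_B\epsilon_B-(\pi_B\epsilon_B)^2\bigr)\RTT$, i.e., it treats each slot as a Bernoulli$(\pi_B)$ draw scaled by $\RTT$ and does not carry the Markov covariance terms through. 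Your plan to compute $\mathrm{Var}(N(t))$ via the spectral gap $1-q-s$ of $\mathbf P$, and to express $BC_{GE}$ as a product of two-by-two Hellinger-type matrices along the chain, is a more exact treatment of the correlated GE output; it is a refinement, not a fundamentally different argument, and it still feeds into the theorem exactly the same way. Two small notes: (i) the paper additionally records the comparison $l(r_{GE}(t),r_{GE}(t^-))\le l_{BEC}(r(t),r(t^-))$, justified by positive correlation of successes in the bursty regime (so $r_{GE}(t)\ge r_{BEC}(t)$ and hence $BC_{GE}\ge BC_{BEC}$); that observation is not needed for the corollary statement, so its absence in your write-up is not a gap. (ii) You should be careful that the exact Markov variance and the i.i.d. Binomial variance differ by a covariance correction whose sign depends on $1-q-s$; in the bursty regime ($s$ small, $1-q-s>0$) the true variance is larger, so using your tighter $V_{GE}$ makes the bound weaker (larger) than the paper's, which is the conservative, correct direction for an upper bound.
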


\begin{proof}
Let $E$ be the random variable denoting the erasure event. For the GE channel considered the average erasure rate is given by
\begin{equation*}
     \mathbb{E}[E]=\epsilon_G \pi_G + \epsilon_B \pi_B = \pi_B,
\end{equation*}
where $\epsilon_G=0$ and $\epsilon_B=1$. Hence, we assume $r(t^-)=1-p_e(t^-)=\frac{s}{q+s}$.  We denote by $X$ the state random variable.
The variance of the GE channel given by
\begin{eqnarray*}
  V_{GE}(t) &=&  \mathbb{E}(V_{GE}(t)(E|X))\RTT\\
    &=& \mathbb{E}((E-\mathbb{E}(E|X))^2|X)\RTT \\
    &=&  \left( \mathbb{E}[E^2|X]-(\pi_B \epsilon_B)^2 \right)\RTT\\
    &=& \left((\pi_B \epsilon_B)-(\pi_B \epsilon_B)^2\right)\RTT.
\end{eqnarray*}

Now, in the same way that given above for the BEC channel, the rate at time slot $t$ for the GE channel is bounded by
\[
    r_{GE}(t)\leq r_{GE}(t^-) + \frac{\sqrt{V_{GE}(t)}}{k+m},
\]
and the throughput by is bounded by
\begin{eqnarray*}
    \eta_{GE}&\leq& r(t^-) - l(r_{GE}(t),r_{GE}(t^-))\\
    &=& 1-\pi_B - l(r_{GE}(t),r_{GE}(t^-)),
\end{eqnarray*}
using Bhattacharyya distance given in \eqref{eq:BD}.
\begin{figure*}
	\centering
	{\includegraphics[width=0.78\textwidth]{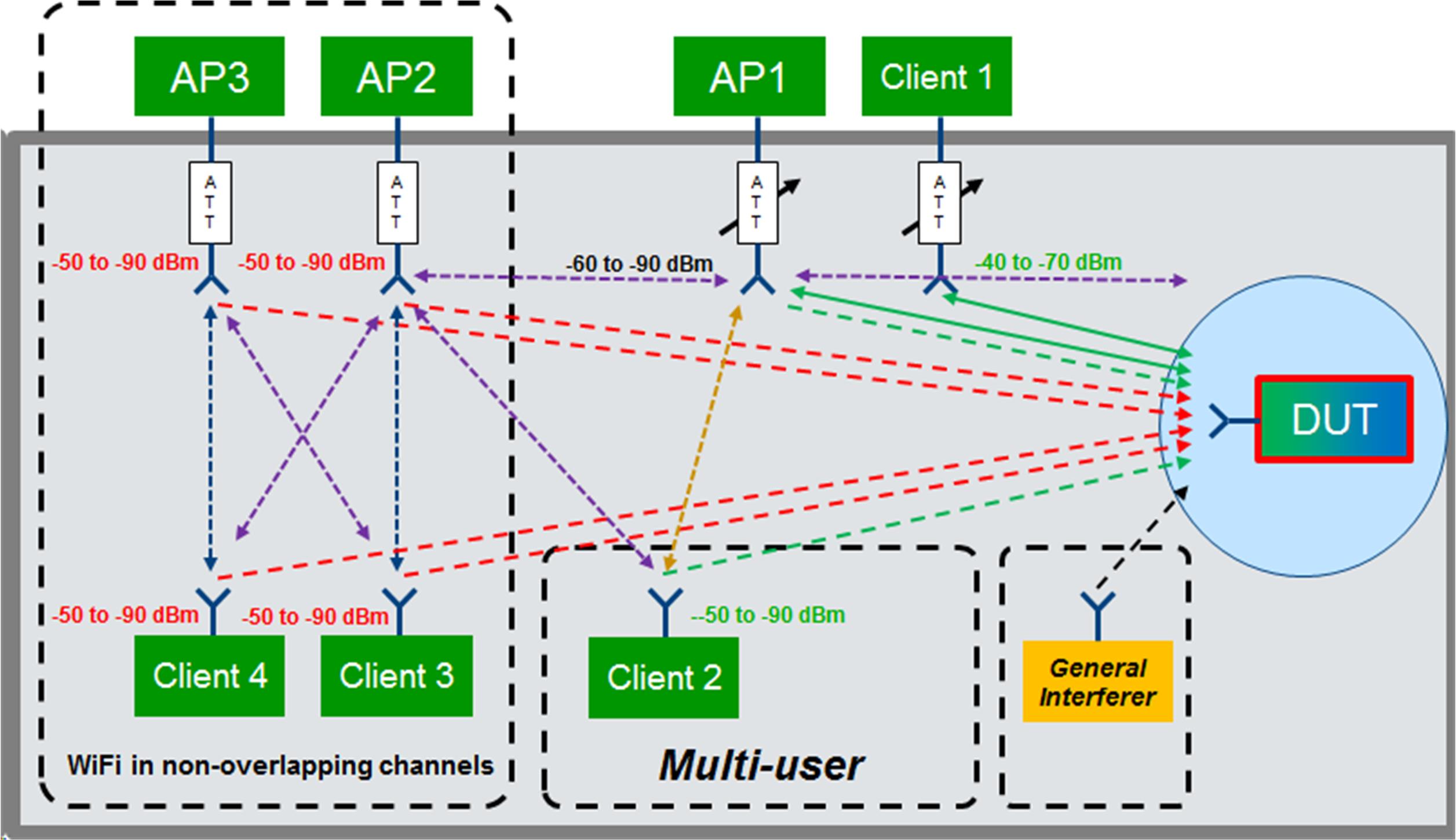}}
	\caption{Controlled-congested setup considered by Intel. In this setup, there are three access points (APs), four clients (one of them is multi-user), where the Device Under Test (DUT) is the receiver, and one general interferer (no WiFi or Bluetooth (BT)) using a signal generator. The dashed lines to the DUT are the point-to-point communication channels analyzed. The WiFi standards used are IEEE 802.11n, IEEE 802.11ax and IEEE 802.11ac. The transmit powers can be adjusted to test several possible scenarios and environments. The traces were collected from all senders in different channel conditions (i.e. controlled by the channel power gains). Those traces include the packets transmitted in each time slot and the acknowledgments obtained over the feedback channel from the receiver.}
	\label{fig:Intel_Setup}
\end{figure*}

%%ADD
We can upper bound the Bhattacharyya distance of the GE channel using the Bhattacharyya distance of the BEC as
\begin{align}
 l(r_{GE}(t),r_{GE}(t^-)) \leq l_{BEC}(r(t),r(t^-)).
\end{align}
The above relation is due to that since we have $\pi_G=1-\epsilon=s/(s+q)$. This implies that $1-q$ is large compared to $q$, and when the channel state is $G$, the probability that the channel stays there will be higher. As a result, the number of successes in a sequence of $\RTT$ dependent realizations of the channel is positively correlated, and the probability distribution of the number of successes satisfies
\begin{align}
    r_{GE}(t)\geq r_{BEC}(t),
\end{align}
yielding
\[
BC_{GE}(r(t),r(t^-))\geq BC_{BEC}(r(t),r(t^-)).
\]
\end{proof}

In Fig. \ref{fig:ThroughputError05BecGe}, we show the throughput upper bound of AC-RLNC as function of $\RTT$ for GE channel with \ton{$q = \epsilon$} and $s=0.3$. We can note that in this case the suggested solution may achieves around $91\%$ of the channel \mm{capacity\off{\footnotemark[\ref{note2}]}}. In Fig. \ref{fig:Throughput_GE}, we present the throughput upper bound for the GE channel where \ton{$q = \epsilon$ and} $s=0.3$. Similar to BEC results presented in Fig. \ref{fig:Throughput}, we limit the maximum number of information packets we allow in the RLNC to $\ov=4k$. This is to show that by changing the parameters of AC-RLNC we may obtain the upper bound.
\off{
\ton{In the second case, we will consider, we assume that in the first period of RTT in a window the rate $r(t^-)$ estimated at the sender in time index $t-\RTT$ is slightly lower that the actual rate $t(t)$ at the channel in the time index $t$.
Now, in the second period of the RTT we assume the was higher as considered in the first case.
}
%%%%%%%%%%%%%%%%%%%%%%%%%%%%%%%%%%%%%%%%%%%%%%%%%%%%%%%%%%%%%%%
\subsection{Adaptive Threshold}
%%%%%%%%%%%%%%%%%%%%%%%%%%%%%%%%%%%%%%%%%%%%%%%%%%%%%%%%%%%%%%%
First, in each window $i$ according the $\RTT$ we will calculate the variance, $v_e(i)$, of the channel. For example, for each window, in the BEC since the erasure probability is Bernoulli i.i.d.,
\[
v_e(i) = \frac{1}{\RTT}\sum_{l=1}^{\RTT} (\textbf{1}_{c_l}- e/t)^2
\]
where $\textbf{1}_{c_l}$ is the indicator for the event that the coded packet $c_l$ is not delivered to the receiver (namely erased).

Then we will calculate the recursive average of the variance by the forgetting factor $0 \leq \lambda \leq 1$, such that
\[
\Tilde{v}_e(i) = \lambda \Tilde{v}_e(i-1) + (1-\lambda) v_e(i).
\]
To conclude we will choose the threshold $th$, such that, $th \leq \Tilde{v}_e(i)$ and the forgetting factor $\lambda$ according the delay constraint of each application. When $\lambda$ is close to $1$ we will support the average erasure probability of the channel. As we reduce the value of $\lambda$ we support the fluctuations of the channel.}

%%%%%%%%%%%%%%%%%%%%%%%%%%%%%%%%%%%%%%%%%%%%%%%%%%%%%%%%%%%%%%%
\section{Experimental Study and Simulation Results}\label{simulation}
%%%%%%%%%%%%%%%%%%%%%%%%%%%%%%%%%%%%%%%%%%%%%%%%%%%%%%%%%%%%%%%
In this section, we demonstrate the performance of AC-RLNC with feedback. We first simulate the selective repeat ARQ (SR-ARQ) protocol, and validate the throughput $\eta$, the mean in order delivery delay $\Dm$, and the maximum in order delivery delay $\DM$ performance. We then validate the simulation results via experimental simulation results using the Intel traces for a point-to-point wireless communication system. The controlled-congested setup considered by Intel is illustrated in Fig. \ref{fig:Intel_Setup} along with its description.
\begin{figure*}
    {\includegraphics[width=\textwidth]{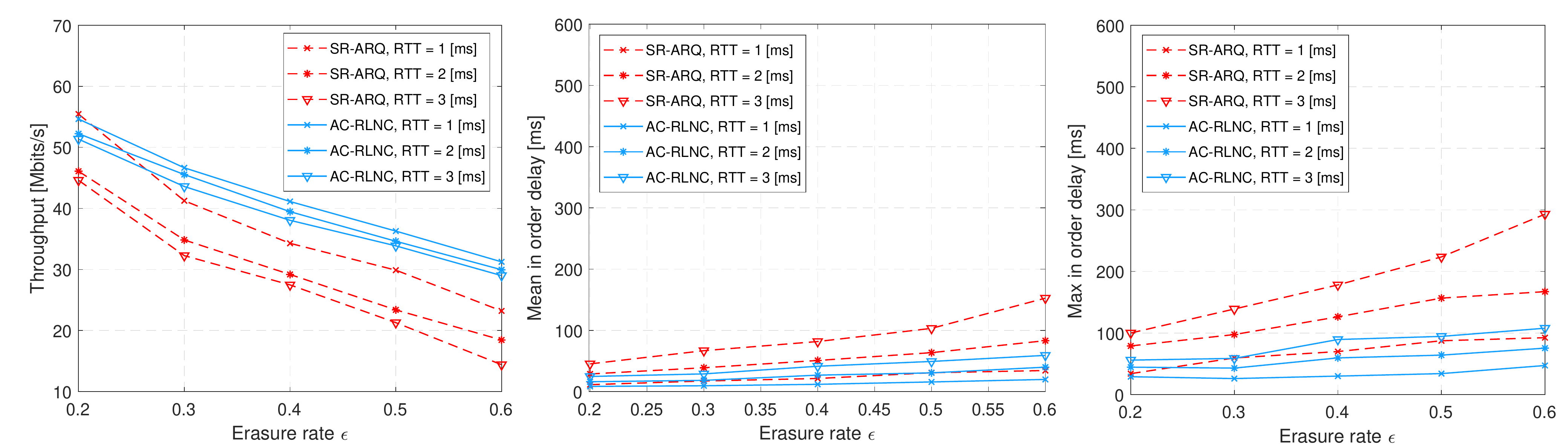}}
	\caption{Experimental study with Intel data. Throughput (left), mean in order delay (middle), and maximum in order delay (right).}
	\label{fig:ES}
\end{figure*}

We provide a performance comparison between SR-ARQ and AC-RLNC in terms of $\eta$, $\Dm$ and $\DM$, with respect to the rate $r$, for different values of RTT\footnote{slot durations can be adjusted based on the measurements from real systems}. We further assume that the retransmission criterion is $r>d$, i.e. $th=0$, i.e. the sender only tracks the average erasure rate per window and limits to $2k$ the maximum number of information packets allowed to overlap (i.e. $\ov=2k$). We compare the performance of different channel models: BEC and GE models.

{\bf Simulation for BEC.} In Fig. \ref{fig:BEC}, we evaluate the performance of AC-RLNC and contrast it with SR-ARQ's performance for a BEC in terms of $\eta$ (left), $\Dm$ (middle), and $\DM$ (right). The throughput gain of the AC-RLNC with respect to SR-ARQ is significant at high RTT values (nearly doubled at $\epsilon=0.4$). The throughput of both models is comparable when RTT is low because the need for adaptive FEC mechanism is eliminated. However, AC-RLNC always performs better. We can see significant gains (more than double at $\epsilon\geq 0.4$) in $\Dm$ and $\DM$ of AC-RLNC with respect to SR-ARQ. This is because the sender compares the DoFs received, $d$ (learnt via feedback) with the transmission rate, $r$ of the channel, adapts the FEC insertion rate. This enables a higher number of successful transmissions than the time-invariant streaming codes (Fig. \ref{fig:foobar}-a), and systematic codes with feedback (Fig. \ref{fig:foobar}-b). %As also demonstrated in Fig. \ref{fig:foobar}, for a given deadline constraint.
%In Sect. \ref{simulation} and especially in Fig. \ref{fig:BEC}, we demonstrate the performance with BEC of AC-RLNC with feedback given in Sect. \ref{algo} by simulations.

{\bf Simulation for GE channel.} In Fig. \ref{fig:GE}, we investigate and contrast the performance of the AC-RLNC with the performance of SR-ARQ for the GE channel where \ton{$q = \epsilon$ and} $s=0.3$, in terms of the $\eta$ (left), $\Dm$ (middle), and $\DM$ (right). While $\eta$ drops for both models because the channels are bursty, it is still possible to see the gain of using AC-RLNC. When we look at the delay performance, however, we can see that AC-RLNC can handle the burst erasures better than SR-ARQ in the sense that the gain is at least tripled for $\epsilon\geq 0.4$. Similarly, when we compare the $\DM$ performances, we can see that AC-RLNC is more stable in terms of the maximum in order delivery delay, such that the delay tail has a sub-Gaussian behavior.

\off{\ton{I am not sure if the following 2 paragraphs correspond to the same set of simulations.}
In Fig \ref{fig:DB}, for BEC we present the throughput, mean in order delay, and maximum in order delay bounds along with the numerical simulations, where we select $th=\Tilde{v}_e(i)$ with different values of $\lambda$.}

{\bf Validation with Intel traces.}
We next explain the implementation setup we consider in order to validate our algorithm in Sect. \ref{algo}, and the analytical bounds presented in Sect. \ref{bounds}. This is a controlled-congested setup considered by Intel, which is summarized in Fig. \ref{fig:Intel_Setup}. In this setup, there are three access points (AP1, AP2, AP3) and four clients, where one of the client is multi-user (Client 2). The Device Under Test (DUT) is the receiver, and there is one general interferer (using a signal generator). The dashed lines to the DUT are the point-to-point communication channels to be analyzed. The transmit powers can be adjusted to test several scenarios. The WiFi standards used are IEEE 802.11n, IEEE 802.11ax and IEEE 802.11ac. The WiFi traces are collected from all senders (clients and APs). The traces contain the time slot information of the feedback acknowledgements obtained from the receiver.
In Fig. \ref{fig:ES}, using the controlled-congested setup of Intel, we demonstrate the behavior of $\eta$ (left), $\Dm$ (middle), and $\DM$ (right). From the simulation results, we can note that the experimental study results have a good agreement with the simulation results using the GE channel model with the appropriate burst parameter.

\off{\ton{Shall we move the Intel setup to the simulation section?}
%%%%%
{\bf Controlled-congested setup implementation by Intel.}
We next explain the implementation setup we consider in order to validate our algorithm and the analytical bounds presented in Sect. \ref{bounds}. This is a controlled-congested setup considered by Intel, which is summarized in Fig. \ref{fig:Intel_Setup}. In this setup, there are three access points and four clients, where one of the client is multi-user. The Device Under Test (DUT) is the receiver, and there is one general interferer (using a signal generator). The dashed lines to the DUT are the point-to-point communication channels to be analyzed. The transmit powers can be adjusted to test several scenarios. The WiFi standards used are IEEE 802.11n, IEEE 802.11ax and IEEE 802.11ac. The WiFi traces are collected from all senders (clients and access points). The traces contain the time slot information of the packets transmitted, and the feedback acknowledgements obtained from the receiver. In the experimental study in Sect. \ref{simulation}, we validate this controlled-congested setup of Intel, in terms of the throughput, mean in order delay and maximum in order delay}
%%%%%%%%%%%%%%%%%%%%%%%%%%%%%%%%%%%%%%%%%%%%%%%%%%%%%%%%%%%%%%%
\subsection{Discussion}
%%%%%%%%%%%%%%%%%%%%%%%%%%%%%%%%%%%%%%%%%%%%%%%%%%%%%%%%%%%%%%%
The advantage of the AC-RLNC with feedback in terms of throughput and delay is even more evident when we have a GE channel instead of a BEC. This is because the adaptive causal network coding protocol can track the erasure pattern, and is more robust to burst erasures.

The causal and adaptive RLNC algorithm can be used for a variety of scenarios with different delay requirements. Since the algorithm is adaptive and robust to burst erasures and brings down the gap between $\DM$ and $\Dm$, it can be used to run applications with different delay sensitivities (e.g. file download $\Dm$, and in movie streaming concerning with minimizing $\DM$). This indicates the versatility of the proposed coding approach. Our analytical results and bounds (provided in Sect. \ref{bounds}) permit accounting accurately for delay when coding.

If the $\RTT$ is long, e.g., in the case of satellite, WiMAX, the insertion of FEC of $m$ retransmissions can be split during the window, instead of including the FEC at the end of the window. How to optimally split the insertion of FEC is an interesting extension of the current work.

%%%
\off{\subsection{Use Cases of the Algorithm}
The causal and adaptive RLNC algorithm can be used for a variety of scenarios with different delay requirements. Since the algorithm is adaptive and robust to burst erasures and brings down the gap between $\DM$ and $\Dm$, it can be used to run applications with different delay sensitivities (e.g. file download with the goal of minimizing the overall completion time that is proportional to $\Dm$, and in movie streaming with the goal of minimizing the maximum in order delay $\DM$). This indicates the versatility of the proposed coding approach. Our analytical results (provided in Sect. \ref{bounds}) permit accounting accurately for delay when coding.}

%%%%%%%%%%%%%%%%%%%%%%%%%%%%%%%%%%%%%%%%%%%%%%%%%%%%%%%%%%%%%%%
\section{Conclusions} \label{conc}
%%%%%%%%%%%%%%%%%%%%%%%%%%%%%%%%%%%%%%%%%%%%%%%%%%%%%%%%%%%%%%%
We proposed a causal and adaptive RLNC-based algorithm (i.e. AC-RLNC) for erasure channels. AC-RLNC can increase the throughput gains and reduce the in order delivery delay (not only the mean, but also the maximum) significantly. We derive the throughput and the mean and maximum in order delivery delay of AC-RLNC. The numerical simulations suggest that the adaptive model can predict the behavior in bursty environments and improve the gains even further. This is also consistent with the experimental validation of Intel traces. The proposed approach is a good starting point to demonstrate the gains that can be obtained via a causal and adaptive coding model. The throughput gains can be more than twice of SR-ARQ for memoryless channels, the gain in terms of the mean in order delivery delay is tripled for bursty channel, and the maximum in order delivery is more stable compared to SR-ARQ.

Future work includes the extension of the single-path model to multi-path, and optimize the packet scheduling. Extensions also include the study of more general mesh networks, where the interference and congestion are not negligible. These tradeoffs can be also exploited to see the fundamental limits of delay and throughput with hardware constraints from a practical point of view.

%%%%%%%%%%%%%%%%%%%%%%%%%%%%%%%%%%%%%%%%%%%%%%%%%%%%%%%%%%%%%%%
\section*{Acknowledgment}
%%%%%%%%%%%%%%%%%%%%%%%%%%%%%%%%%%%%%%%%%%%%%%%%%%%%%%%%%%%%%%%
We want to thank Shlomi Buganim for his contributions in the experimental study given in Sect. \ref{simulation}.

%%%%%%%%%%%%%%%%%%%%%%%%%%%%%%%%%%%%%%%%%%%%%%%%%%%%%%%%%%%%%%%
\appendix\label{App:adaptivecausalRLNC}
%%%%%%%%%%%%%%%%%%%%%%%%%%%%%%%%%%%%%%%%%%%%%%%%%%%%%%%%%%%%%%%
We provide a detailed description of the AC-RLNC algorithm realization in Fig. \ref{fig:foobar}-(d) for a point-to-point channel.
%%%%%%%%%%%%%%%%%%%%%%%%%%%%%%%%%%%%%%%%%%%%%%%%%%%%%%%%%%%%%%%

\begin{enumerate}[wide, labelwidth=!, labelindent=0pt]

    \item In Fig. \ref{fig:foobar}-(d) , we show a coding matrix using AC-RLNC with feedback. In each time slot $t$, the sender transmits a coded packet $c_t$ to the receiver. The receiver sends feedback (without any losses or delay) on each coded packet transmitted. For the $t^{\rm th}$ coded packet transmitted, the sender receives a reliable acknowledgment (e.g. ACK(t) or NACK(t)) after an RTT. For this particular example, $\RTT$ is $4$ time slots. %[0014]

    \item  In this example, at $t=1$, the sender transmits information packet $p_1$ to the receiver. The effective window size is $k=3$ (as $\RTT=k+1$). Since no feedback has been received from the receiver, and it is not the end of the effective window (EW),  %(e.g. the effective window does not end with $k$ new information packets),
    the sender includes packet $p_1$ into the effective window and generates a coded packet $c_1$ that includes a linear combination of the information packets included in the effective window (e.g. $p_1$), and transmits $c_1$ to the receiver. %\tof{The sender can then update the rate of DoF $d=m_d/a_d$.? It cannot because the sender does not have the feedback.}
    The sender may check to determine whether coded packet $c_1$ satisfies $DoF(c_t)\geq 2k$, and concludes that $c_1$ does not exceed the allowable DoF.%[0017]

    \item At $t=2$, the sender wants to transmit information packet $p_2$ to the receiver. Since no feedback has been received from the receiver, and the effective window does not end with $k$ information packets, the sender includes $p_2$ into the effective window and generates and transmits a coded packet $c_2$ that is a linear combination of the packets in the effective window (i.e. $p_1$ and $p_2$). The sender can then update the rate of DoF. It also checks whether $DoF(c_2)>2k$.%$c_2$ exceeds the allowable DoF (e.g. $DoF(c_t)>2k$). %[0018]

    \item At $t=3$, the sender wants to transmit information packet $p_3$. Since no feedback has been received from the receiver, and the effective window does not end with $k$ new information packets, the sender includes packet $p_3$ into the effective window and generates and sends a coded packet $c_3$ that is a linear combination of the information packets included in the effective window (i.e., $p_1$, $p_2$ and $p_3$). The sender then updates the rate of DoF and checks if $DoF(c_3)>2k$. %the DoF condition is satisfied. %[0019]

    \item At $t=4$, the sender wants to transmit information packet $p_4$. However, notwithstanding that no feedback has been received from the receiver, the sender checks and determines that the effective window ends with $k$ new information packets (i.e. $p_1$, $p_2$ and $p_3$). As a result, the sender generates and transmits a FEC packet to the receiver. For example, the FEC packet at $t=4$ can be a new linear combination of $p_1$, $p_2$ and $p_3$. It is also possible that the sender may transmit the same FEC a multiple times. We let the number of FECs be $m$, which can be specified based on the average erasure probability, or provided over the feedback channel. In other words, $m$ is a tunable parameter. In Fig. \ref{fig:foobar}-(d), $m=1$, which results in low throughput when the channel rate is high, and low in order delay when the rate is low. Hence, $m$ can be adaptively adjusted exploiting the value of the average erasure rate in the channel to achieve a desired delay-throughput tradeoff. %[0020] In the patent document there are lots of typos.

    \item At $t=4$, transmission of FEC packet is noted by the designation ``fec" in row $4$. The sender can then update the DoF added to $c_t$, and the rate of DoF. Note that the transmission of this FEC is initiated by the sender (e.g., the transmission of the FEC packet was not %requested or otherwise
    a result of feedback from the receiver). Furthermore, $p_4$ is not included in the effective window. The sender may check and determine that $c_4$ does not exceed the allowable DoF ($DoF(c_4)>2k$). %[0021]

    \item At $t=5$, the sender determines that it needs to transmit $p_4$. It also sees the acknowledgement of the receipt of $c_1$ (denoted by ACK(1) in the Feedback column at row $t=5$). Hence, the sender can remove a DoF (e.g., $p_1$) from the effective window. The effective window slides to the right (equivalent to sliding window). In this case, the sender determines that the effective window does not end with $k$ new information packets, and checks if the rate $r$ is higher than the DoF rate $d$, i.e. the threshold condition for retransmission (e.g., transmission of an FEC) is $th=0$. Hence, the sender can decide that the channel rate $r$ is sufficiently higher than the DoF rate $d$ (e.g. denoted by $(1-0/1)-0/1>0$ in the right-most column at row $t=5$). Note that $r=1-e/t$, where $e$ is the number of erasure packets and $t$ is the number of transmitted packets for which the sender has received acknowledgments. Having determined that $r-d\geq 0$, the sender includes $p_4$ into the effective window and generates a coded packet $c_5$ that includes a linear combination of the information packets in the effective window (i.e., $p_2$, $p_3$, and $p_4$), and transmits $c_5$ to the receiver. The sender then updates the rate of DoF. It may also check that $c_4$ does not exceed the allowable DoF ($DoF(c_4)>2k$).  %[0022]

    \item At $t=6$, the sender wants to transmit $p_5$. It also sees the acknowledgment of the receipt of $c_2$ (denoted by ACK(2) in the Feedback column at row $t=6$). Hence, the sender can remove a DoF (e.g., $p_2$) from the effective window. In this case, the sender determines that the effective window does not end with $k$ new information packets (but the effective window ends with new information packet $p_4$). The sender checks if $r-d\geq 0$. Having determined that $(1-0/2)-0/1>0$ (in the right-most column at row $t=6$), the sender includes $p_5$ into the effective window and generates and transmits a coded packet $c_6$ that includes a linear combination of the information packets included in the effective window (e.g., $p_3$, $p_4$ and $p_5$). The sender then updates the rate of DoF. It also checks that $c_6$ does not exceed the allowable DoF ($DoF(c_6)>2k$).   %[0023]

    \item At $t=7$, the sender wants to transmit $p_6$. The sender also sees a negative acknowledgment indicating the non-receipt of $c_3$ (denoted by NACK(3) in the Feedback column at row $t=7$). The non-receipt of $c_3$ is denoted by the x'ed out dot in row $3$. Upon the reception of NACK, the sender increments a count of the number of erasures (denoted as $e$). For example, since this is the first erasure, the count of $e$ is incremented to the value of one. The sender then updates the missing DoF to decode $c_7$ according to the number of NACKs received. In this instance, the sender updates the missing DoF to decode $c_7$ to a value of one ($1$) since this is the first NACK. The sender then checks if the retransmission condition is satisfied. Since $(1-1/3)-1/1<0$, it generates and transmits an FEC packet $c_7$ to the receiver. In this instance, the FEC packet $c_7$ is a new linear combination of $p_3$, $p_4$ and $p_5$. Note that the threshold condition $(1-1/3)-1/2<0$ noted in the right-most column at row $t=7$ indicates the state of the threshold condition subsequent to transmission of $c_7$. The transmission of this FEC is a result of feedback by the receiver (which is noted as ``fb-fec" in row $t=7$). The sender then updates the DoF added to $c_7$. The sender determines that the effective window does not end with $k$ new information packets (e.g. effective window ends with new information packets $p_4$ and $p_5$), and then updates the rate of DoF. It also checks that $c_7$ does not exceed the allowable DoF ($DoF(c_7)>2k$). %[0024]

    \item At $t=8$, the sender determines that it still needs to send $p_3$ to the receiver. The sender also sees a negative acknowledgment indicating the non-receipt of $c_4$ (denoted by NACK(4) in the Feedback column at row $t=8$). Upon the receipt of NACK, the sender increments a count of $e$. In this instance, since this is the second erasure, the count of $e$ is incremented by one to a value of two (i.e. $e=1+1$). The sender then updates the missing DoF to decode $c_7$. After it checks that the retransmission threshold is not satisfied, the sender generates and transmits an FEC packet $c_8$ to the receiver. The FEC packet $c_8$ is a new linear combination of $p_3$, $p_4$ and $p_5$. The transmission of this FEC is a result of feedback by the receiver (which is noted as ``fb-fec" in row $t=8$). The sender then updates the DoF added to $c_8$, and determines that the effective window does not end with $k$ new information packets (effective window ends with new information packets $p_4$ and $p_5$). The sender then updates the rate of DoF. It may also check that $c_8$ does not exceed the allowable DoF ($DoF(c_8)>2k$). %[0025]

    \item At $t=9$, the sender determines that it still needs to send $p_3$ to the receiver. The sender also sees an acknowledgment indicating the receipt of $c_5$ (denoted by ACK(5) in the Feedback column at row $t=9$). Since sender sees an acknowledgment, and determines that the effective window does not end with $k$ new information packets (effective window ends with new information packets $p_3$, $p_4$ and $p_5$), the sender checks if $r-d\geq 0$. Having determined that $(1-2/5)-1/3>0$ (as shown in the right-most column at row $t=9$), the sender includes packet $p_6$ into the effective window, and generates and transmit a coded packet $c_9$ that includes a linear combination of the information packets included in the effective window (i.e., $p_3$, $p_4$, $p_5$ and $p_6$). The sender then updates the rate of DoF. It may also check that $c_9$ does not exceed the allowable DoF ($DoF(c_9)>2k$). %[0026]

    \item \mm{At $t=10$, the sender determines that it needs to send $p_3$ to the receiver. The sender also sees an acknowledgment indicating the receipt of $c_6$ (denoted by ACK(6)). Since the sender sees an acknowledgment, it determines that the effective window ends with information packets $p_3$, $p_4$, $p_5$, and $p_6$. As a result, the sender generates and transmits a FEC packet $c_{10}$. The sender then updates the added DoF (i.e., $ad=3+1$). Now, since $(1-2/6)-1/4>0$, the sender includes packet $p_7$ into the effective window. Then it generates a coded packet $c_{11}$ to transmit at time slot $11$ that includes a linear combination of the information packets in the effective window (i.e., $p_3$, $p_4$, $p_5$, $p_6$ and $p_7$). It may also check that $c_{11}$ does not exceed the allowable DoF ($DoF(c_{11})>2k$).}

    \off{item At $t=10$, the sender determines that it needs to send $p_3$ to the receiver. The sender also sees a negative acknowledgment indicating the non-receipt of $c_6$ (denoted by NACK(6) in the Feedback column at row $t=10$). Hence, the sender increments the number of erasures by one to a value of three (i.e., $e=2+1$). The sender then updates the missing DoF to decode $c_9$. Having determined that $(1-3/6)-2/3<0$ (as shown in the right-most column at row $t=10$), it generates and transmits an FEC packet $c_{10}$ to the receiver. The FEC packet $c_{10}$ is a new linear combination of information packets $p_3$, $p_4$, $p_5$ and $p_6$. Transmission of this FEC is a result of feedback by the receiver (which is noted as ``fb-fec" in row $t=10$). The sender then updates the DoF added to $c_{10}$. The sender then checks that the effective window ends with new information packets $p_3$, $p_4$, $p_5$, and $p_6$. As a result, the sender generates and transmits a FEC packet (this occurs during the next time slot i.e. $t+1=11$). In this case, since the most recently transmitted coded packet is $c_{10}$ (at time $t=10$), the FEC will be a coded packet $c_{11}$ to be transmitted (at time $t=11$). The sender then updates the rate of DoF. It also checks that $c_{10}$ does not exceed the allowable DoF ($DoF(c_{10})>2k$).} %[0027]

   \item \mm{At $t=11$, the sender transmits $c_{11}$. However, according to the acknowledgment indicating the receipt of $c_7$ (denoted by ACK(7)), the sender removes DoF (e.g., information packets $p_3$, $p_4$ and $p_5$) from the effective window. The sender then updates the rate of DoF.} %It may also check that $c_{11}$ does not exceed the allowable DoF ($DoF(c_{11})>2k$).} %[0028]

    \item \mm{At $t=12$, the sender sees an acknowledgment indicating the receipt of $c_7$ (denoted by ACK(7) in the Feedback column at row $t=12$). Since the sender sees an acknowledgment, and determines that the effective window does not end with $k$ new information packets (effective window ends with new information packet $p_7$), the sender checks if $r-d\geq 0$. Having determined that $(1-3/8)-0/1>0$ (as shown in the right-most column at row $t=12$), the sender includes packet $p_8$ into the effective window. Then the sender generates and transmits a coded packet $c_{12}$. This coded packet includes a linear combination of the information packets available in the effective window (i.e., $p_6$, $p_7$ and $p_8$). It may also check that $c_{12}$ does not exceed the allowable DoF ($DoF(c_12)>2k$).} %[0029]

   \item The rest of the example follows using similar steps. The sender continues sending information or coded packets during time slots $t\in[13,\,28]$. Hence, the sender can adaptively adjust its transmission rate according to the process described above.%[0030]
  \end{enumerate}

%%%%%%%%%%%%%%%%%%%%%%%%%%%%%%%%%%%%%%%%%%%%%%%%%%%%%%%%%%%%%%%
\bibliographystyle{IEEEtran}
\bibliography{references}
%%%%%%%%%%%%%%%%%%%%%%%%%%%%%%%%%%%%%%%%%%%%%%%%%%%%%%%%%%%%%%%
\end{document}